\documentclass[twoside]{article}

%
\usepackage[accepted]{aistats2021}
%


\setlength{\pdfpageheight}{11in}
\setlength{\pdfpagewidth}{8.5in}



\usepackage{lmodern}
\usepackage[T1]{fontenc}
\usepackage{relsize}

\usepackage[square,authoryear]{natbib}
\usepackage[colorlinks=true,linkcolor=black,citecolor=blue]{hyperref}
\usepackage{tikz}
\usepackage{amsfonts}
\usepackage[utf8]{inputenc}
\usepackage[english]{babel}
\usepackage{amsthm}
\usepackage{algorithm}
\usepackage{algorithmic}
\usepackage{bbold}
\usepackage{dsfont}

\usetikzlibrary{arrows}
\usepackage[nolabel, final]{showlabels}
\usepackage{graphicx}
\usepackage{subfigure}
\usepackage{centernot}
\usepackage{mathtools}
\usepackage{bbm}

\usepackage{wrapfig}
\usepackage{enumerate}

\newtheorem{Theorem}{Theorem}

\newtheorem{Corollary}{Corollary}
\newtheorem{Definition}{Definition}
\newtheorem{Proposition}{Proposition}
\newtheorem{Assumption}{Assumption}

\newcommand{\R}{\mathbb{R}} 
\renewcommand{\H}{\mathcal{H}} 
\newcommand{\G}{\mathcal{G}} 
\newcommand{\X}{\mathcal{X}} 
\newcommand{\A}{\mathcal{T}} 
\renewcommand{\P}{\mathbb{P}} 
\newcommand{\E}{\mathbb{E}} 
\newcommand{\N}{\mathbb{N}} 
\newcommand{\T}{\mathcal{T}}
\newcommand{\Trans}{\hspace{-0.25ex}\top\hspace{-0.25ex}}
\newcommand{\Var}{\mathbb{V}ar}
\newcommand{\gKSS}{\operatorname{gKSS}}


\newcommand{\wk}[1]{\textcolor{orange}{#1}}

\newcommand{\gr}[1]{\textcolor{magenta}{#1}}

\newcommand{\IR}{\mathbbm{R}}

\def\%#1{\mathcal{#1}}

\newcommand{\ahalf}{{\textstyle\frac{1}{2}}}

\newcommand{\beas}{\begin{eqnarray*}}
\newcommand{\enas}{\end{eqnarray*}}

\newcommand{\bea}{\begin{eqnarray}}
\newcommand{\ena}{\end{eqnarray}}

\newcommand{\IE}{\mathbbm{E}}
\newcommand{\IP}{\mathbbm{P}}

\newcommand{\Id}{\mathop{\mathrm{Id}}}

\newtheorem{theorem}{Theorem}[section]


\def\be#1\ee{\begin{equation*}#1\end{equation*}}
\def\ben#1\ee{\begin{equation}#1\end{equation}}

\def\bes#1\ee{\begin{equation*}\begin{split}#1\end{split}\end{equation*}}
\def\besn#1\ee{\begin{equation}\begin{split}#1\end{split}\end{equation}}

\def\bg#1\ee{\begin{gather*}#1\end{gather*}}
\def\bgn#1\ee{\begin{gather}#1\end{gather}}

\def\bm#1\ee{\begin{multline*}#1\end{multline*}}
\def\bmn#1\ee{\begin{multline}#1\end{multline}}

\def\ba#1\ee{\begin{align*}#1\end{align*}}
\def\ban#1\ee{\begin{align}#1\end{align}}


\def\bbklr#1{\Bigl(#1\Bigr)}

\def\norm#1{\Vert#1\Vert}
\def\bnorm#1{\bigl\Vert#1\bigr\Vert}

\def\abs#1{\vert#1\vert}
\def\babs#1{\bigl\vert#1\bigr\vert}

\begin{document}

%

%

\twocolumn[

\aistatstitle{A Stein Goodness-of-fit Test for Exponential Random Graph Models}

\aistatsauthor{ Wenkai Xu \And Gesine Reinert }

\aistatsaddress{ Gatsby Computational Neuroscience Unit\\ University College London \And    Department of Statistics\\
  Oxford University } ]

\begin{abstract}
We propose and analyse a novel nonparametric goodness-of-fit testing procedure for exchangeable exponential random graph models (ERGMs) when a single network realisation is observed. The test determines how likely it is that the observation is generated from a target unnormalised ERGM density.  Our test statistics are derived from a kernel Stein discrepancy, a divergence constructed  via Stein’s method using functions {in}
a reproducing kernel Hilbert space, 
combined with a discrete Stein operator for ERGMs.  The test is a Monte Carlo test {based on}
simulated networks from the target ERGM. We show theoretical properties for the testing procedure for a class of ERGMs. Simulation studies and real network applications are presented.
\end{abstract}

\section{INTRODUCTION}
{Complex data from many application areas are often represented as networks, and probabilistic network models {help}
to understand the expected behaviour of such networks.} 
{In social science, exponential random graph models (ERGMs) have been successfully employed for this task, see for example \cite{wasserman1994social}.} 
{ERGMs can be viewed as}
exponential family models or energy-based models, {and as typical for such models,} statistical inference 
{for} ERGMs
suffers from intractable normalisation constants. 
Monte Carlo methods for parameter estimations in ERGMs {alleviate this issue \citep{snijders2002markov}}, 
and model diagnoses via maximum likelihood (MLE) and maximum pseudo-likelihood
are developed \citep{morris2008ergm}. Statistical properties with particular attention to the normalisation constant {are} 
studied {in}  \cite{chatterjee2013estimating}. 
{For analysing distributions with intractable normalisation constants, Stein's method \citep{barbour2005introduction, chen2010} provides a promising approach \citep{chwialkowski2016kernel,liu2016stein, bresler2019stein}. In \cite{reinert2019approximating}, Stein's method is developed for ERGMs {but not yet applied to goodness-of-fit tests}.  
} 

{Goodness}-of-fit tests for random graph models address the problem of whether the proposed model generates the observed network(s), and 
play {a key role} in understanding and interpreting network structures in real-world applications. 
{A main issue is that replicates are usually not available; the data are represented as only one network.} 
Standard goodness-of-fit tests for ERGMs to date rely on Monte Carlo tests for particular summary statistics
such as vertices degrees \citep{ouadah2020degree}, motifs or subgraph counts \citep{bhattacharyya2015subsampling, ospina2019assessment, chen2019bootstrap}, 
{or spectral properties \citep{shore2015spectral}}.   
The goodness-of-fit tests for ERGM  in \cite{hunter2008goodness} or \cite{ schweinberger2012statistical}  {also assess} the model assumptions via  graphical assessments. 
{\cite{lospinoso2019goodness} combines such statistics into a Mahalanobis-type distance which is assessed via Monte Carlo tests. 
The consistency of type 1 error and the power of the test have not yet been {systematically} investigated.

{N}onparametric goodness-of-fit tests based on 
Stein operators \citep{gorham2015measuring, ley2017stein} and functions in {a}
reproducing kernel Hilbert space (RKHS) \citep{RKHSbook} {for data with replicates build} 
on a kernel Stein discrepancy (KSD) that utilises the strength of a Stein operator to {treat} unnormalised models and optimises {over} test functions in a rich enough RKHS to best distinguish the data from the model distributions.
{S}uch schemes are consistent and have high test power in various scenarios, including multivariate distributions \citep{chwialkowski2016kernel, liu2016kernelized}, discrete distributions \citep{yang2018goodness}, point processes \citep{yang2019stein}, directional distributions \citep{xu2020stein}, and censored data \citep{tamara2020kernelized}. {These scenarios are typically based on i.i.d. samples from the distributions.} 
In addition, the properties of kernel mean embeddings \citep{RKHSbook, muandet2017kernel} enable the extraction of distributional features to perform model comparison and model criticism \citep{jitkrittum2017linear,jitkrittum2018informative, kanagawa2019kernel, jitkrittum2020testing}. 

{Here} we propose a novel goodness-of-fit testing procedure for ERGMs combining a Stein {operator}  {for ERGMs} and functions in 
an 
RKHS. {The class of ERGMs treated here are undirected networks which, when the number of vertices tends to infinity, can be approximated by a suitably chosen Bernoulli random graph, with edge {probability} parameter that 
generally {does not equal} 
the MLE.} {The test is based on only \emph{one} observed network and 
{estimates} the Stein operator 
through re-sampling edge indicators.
{This test} compares the test statistics from {one} observed network to the simulated distribution of the statistics under the null model. As the Stein operator
characterises the target distribution {for this class of ERGMs, under a  member of this class serving as null hypothesis} {we derive 
theoretical results for the test statistic.}
{We also provide a theoretical justification of the proposed re-sampling procedure.} 

{To assess the performance of the test, we use simulated data as well as three real-world applications: Lazega's lawyer network \citep{lazega2001collegial}, a teenager friendship network \citep{steglich2006applying}, and a larger network of legislation co-sponsorship \citep{fowler2006connecting, fowler2006legislative}.}
{We find that on synthetic data, our test is more reliable and has higher power than the standard tests even when only a small number of edge indicators is re-sampled. Moreover, the test can be applied to networks on more vertices than its competitor tests. For the lawyer network, we confirm the suggestion by \cite{lazega2001collegial} of a Bernoulli random graph; for the friendship network we do not reject an ERGM with edges, two-stars and triangles as statistics. For the co-sponsorship network we do not reject a Bernoulli random graph fit whereas the model proposed in \cite{schmid2017exponential} is rejected at level $\alpha = 0.05$.}

{The paper is structured as follows.} We begin our presentation with a short review on ERGM, KSD and the ERGM Stein operator in Section \ref{sec:background}. Section \ref{sec:stein_operator} introduces {our} re-sampling Stein operator for ERGM,  Our goodness-of-fit testing procedure {which is based on what we call the {\it graph kernel Stein statistic} (gKSS), and  the relevant theoretical results}, are {given} in Section \ref{sec:gof}. In Section \ref{sec:exp}, we illustrate the test performances on synthetic data as well as real network applications. 

All the {proofs are deferred to the Supplementary Material. The Supplementary Material also contains more discussions, details for experiment settings and additional experimental results, {as well as a detailed comparison to the correesponding test in \cite{yang2018goodness}.}
{The code and data sets for the experiments are available at \url{https://github.com/clemon13/gkss.git}.} 

\section{BACKGROUND}\label{sec:background}

\subsection{Exponential Random Graph Model{s}} 
Exponential random graph models (ERGM)  are frequently used as parametric models for {social} network analysis \citep{wasserman1994social, holland1981exponential, frank1986markov}; 
{they include Bernoulli random graphs as well as stochastic blockmodels as special cases}. 
Here we restrict attention to undirected, unweighted simple graphs {on $n$ vertices}, 
{without self-loops or multiple edges.}
To define {such} an ERGM, we introduce the following notations.

Let $\G^{lab}_n$ be a set of vertex-labeled  graphs on $n$ vertices and,
for {$N =n(n-1)/2 $,}
{encode} $x \in \G^{lab}_n$ by an ordered collection of $\{0,1\}$ valued variables $x = (x_{ij})_{1 \le  i < j \le n} \in \{0,1\}^N$ {where} $x_{ij}=1$ {if and only if}
there is an edge between $i$ and $j$. 
{For a graph $H$ on at most $n$ vertices, let $V(H)$ denote the vertex set, 
and for $x\in\{0,1\}^N$, denote by $t(H,x)$  the number of
{\it edge-preserving} injections from $V(H)$ to $V(x)$; an injection $\sigma$ preserves edges if for all edges $vw$ of $H$ {with  $\sigma(v)<\sigma(w)$}, $x_{\sigma(v)\sigma(w)}=1$.
For  $v_H =| V(H)| \ge 3$  set 
$$
t_H(x)=\frac{t(H,x)}{n(n-1)\cdots(n-v_H+3)}.
$$
If $H{=H_1}$ is a single edge, then $t_H(x)$ is twice the number of edges of~$x$. In the exponent this scaling of counts matches \cite[Definition~1]{bhamidi2011mixing} and \cite[Sections~3 and~4]{chatterjee2013estimating}.
} 
An ERGM {
for the collection  $x\in \{0,1\}^{N}$
can be defined
} as follows, see  \cite{reinert2019approximating}.

\begin{Definition}
\label{def:ergm} 
Fix $n\in \N$ and $k \in \N$. {{L}et $H_1$ be a single edge {and f}or $l={2}, \ldots, k$ let}  $H_l$ be a connected graph on at most $n$ vertices; 
set $t_l(x) = t_{H_l}(x)$. For  $\beta = (\beta_1, \dots, \beta_k)^{\Trans} {\in \R^k}$ 
and
$t(x) =(t_1(x),\dots,t_k(x))^{\Trans} \in \R^k$ 
$X\in \G^{lab}_n$ follows  the exponential random graph model  $X\sim \operatorname{ERGM}(\beta, t)$ if for  $\forall x\in \G^{lab}_n$,
$$
    \P(X = x) = \frac{1}{\kappa_n(\beta)}\exp{\left(\sum_{l=1}^{k} \beta_l t_l(x) \right)}.
$$
Here $\kappa_n(\beta)$ is the normalisation constant.
\end{Definition}
{
The vector $\beta \in \R^k$ is the parameter vector and the statistics $t(x) =(t_1(x),\dots,t_k(x))^{\Trans} \in \R^k$  are sufficient statistics. Moreover, exchangeability holds; letting $[n] := \{1,\dots, n\}$, for any permutation  $\sigma:{[N] \to [N]}$,
}
$\P(x_1,\dots,x_N) = \P(x_{\sigma(1)},\dots,x_{\sigma(N)}).$    

Many random graph models can be set in this framework. The simplest example is the Bernoulli random graph (ER graph) {with edge probability $0 < p < 1$; in this case},  {$l=1$ and $H_1$ is a single edge}.
{ERGMs can use other statistic in addition to subgraph counts, and many ERGMs model directed networks.} 
Moreover, ERGM{s} can  model network with covariates such as using dyadic statistics to model group interactions between vertices \citep{hunter2008goodness}. 
{Here we restrict attention to the case which is treated 
in \cite{reinert2019approximating} {because}  it is for this case that a Stein characterization {is available}.
}

As the network size increases, the number of possible network configurations increases exponentially {in the number of possible edges}, making the normalisation constant $\kappa_n(\beta)$ {usually} prohibitive to compute in closed form.
Statistical inference on ERGM mainly relies on  MCMC type  methods that utilise the density ratio between proposed state and current state, where the normalisation constant cancels.

\subsection{Kernel Stein Discrepancies}\label{sec:ksd}

We briefly review the {notion of} kernel Stein discrepancy (KSD) {for continuous distributions}  \citep{gorham2015measuring,ley2017stein} and its associated statistical test \citep{chwialkowski2016kernel,liu2016kernelized}. 

Let $q$ be a smooth probability density on 
$\mathbb{R}^d$ that vanishes at the boundary.  The operator $\T_q: (\mathbb{R}^d \to \mathbb{R}^d) \to (\mathbb{R}^d \to \mathbb{R})$ is called a {\it Stein operator} if the following {\it Stein's identity} holds: $\E_q[{\T}_q f] = 0$, where $f=(f_1,\dots,f_d):\mathbb{R}^d \to \mathbb{R}^d$ is any bounded smooth function. The Stein operator $\mathcal{T}_q$ 
for continuous density \citep{chwialkowski2016kernel,liu2016kernelized} is defined as
\begin{align}
\mathcal{T}_q f(x)&=\sum_{i=1}^d \left( f_i(x) \frac{\partial}{\partial x^i} \log q(x) + \frac{\partial}{\partial x^i} f_i(x) \right)
.
\label{eq:steinRd}
\end{align}
{This Stein operator is also called Lagenvin-diffusion Stein operator \citep{barp2019minimum}.}
Since $q$ is assumed to vanish at the boundary and $f$ is bounded, the Stein identity holds due to integration by parts.
As the Stein operator $\mathcal{T}_q$ only requires the derivatives of $\log q$ {and thus}  does not involve computing the normalisation constant of $q$, {it is} useful for dealing with unnormalised models \citep{hyvarinen2005estimation}. 

A  {suitable class of functions ${\mathcal F}$ is such that if} 
 $\E_p[{\T}_q f] = 0$ for all functions $f { \in {\mathcal F}}$, then $p=q$ follows. 
{It is convenient to take 
${\mathcal F} = B_1( {\mathcal H })$,} the 
unit ball of {a} {large enough} RKHS { $ {\mathcal H }$}.
{In particular, t}he kernel Stein discrepancies (KSD) between two densities $p$ and $q$ {based on $\T_q$} is defined as
\begin{equation}
\operatorname{KSD}(p\|q, {{\mathcal H }}) =\sup_{f \in B_1(\mathcal H)} \mathbb{E}_{p}[{\T}_q f]. 
\label{eq:ksd}
\end{equation}
{Under mild regularity conditions,}  $\mathrm{KSD}(p\|q, {{\mathcal H }}) \geq 0$ and $\mathrm{KSD}(p\|q, {{\mathcal H }}) = 0$ if and only if $p=q$
\citep{chwialkowski2016kernel}, {making} 
KSD
a 
proper
discrepancy measure between probability densities. 

{The KSD in  Eq.\eqref{eq:ksd} can be used for testing the model goodness-of-fit as follows.} 
{One can show that}
$\operatorname{KSD}^2(p\|q, {{\mathcal H }}) = {\E}_{x,\tilde{x} \sim p} [h_q(x,\tilde{x})]$,
where {$x$ and $\tilde{x}$ are independent random variables with density $p$} and  $h_q(x,\tilde{x})$ {is given in explicit form {which does not involve $p$};
{\begin{eqnarray}   
h_q(x,\tilde{x})&=& \sum_{i=1}^d \left\langle  \frac{\partial \log q(x)}{\partial x^i}k(x,\cdot) + \frac{\partial k(x,\cdot)}{\partial x^i} , \right. \nonumber \\
&&\left. \frac{\partial \log q(\tilde{x})}{\partial \tilde{x}^i}k(\tilde{x},\cdot) +  \frac{\partial k(\tilde{x},\cdot)}{\partial \tilde{x}^i}\right\rangle_{\H}. \label{hform} 
\end{eqnarray}}  
Suppose we have a set of samples  $\{ x_1,\dots,x_n \}$ from an unknown density $p$ on $\mathbb{R}^d$ and the goodness-of-fit test aims to check whether $p=q$. 
Then $\mathrm{KSD}^2(p\|q, {{\mathcal H }})$ can be empirically estimated by {independent} samples from $p$ using a U-statistics or V-statistics.
The critical value is determined by  bootstrap based on
{weighted chisquare approximations for} 
U-statistics or V-statistics. 

{For goodness-of-fit test of discrete distributions}, 
\cite{yang2018goodness}  proposed a kernel discrete Stein discrepancy (KDSD). 
Essentially, the differential operator in Eq.\eqref{eq:steinRd} is replaced by an appropriately defined difference operator. KDSD is a useful method for assessing  goodness-of-fit of  ERGMs (as discrete random objects) when a large set of networks are observed \cite[Figure~1(d)]{yang2018goodness}, {but} {is not applicable} 
when only one single network 
is observed. More details
can be found in the Supplementary Material \ref{supp:KDSD_supp}.

\subsection{The {ERGM Stein Operator}}
 Instead of using the Stein operator from \cite{yang2018goodness} we employ the {Stein operator} from \cite{reinert2019approximating}. With 
 {$N= n(n-1)/2$}
 let $e_s \in  \{0,1\}^N$ be a vector with $1$ in 
 coordinate {$s$} and 0 in all others; 
$x^{(s,1)} = x + (1-x_s) e_s$ {has the $s$-entry  replaced of $x$ by the value 1, and} $x^{(s,0)} = x - x_s e_s$ {has  the $s$-entry  of $x$ replaced by the value 0; moreover,}
${x}_{- s}$ is the set of edge indicators with  entry $s$ removed.
Then a (Glauber dynamics) Markov process on $ \{0,1\}^N$ is introduced with transition probabilities 
$$\P (x \rightarrow x^{(s,1)} ) = \frac1N -
 \P( x \rightarrow  x^{(s,0)}) = \frac1N q_X(x^{(s,1)} | x)$$
where
$ q_X(x^{(s,1)} | {x_{-s}} ) = \P ( X_s=1| {X_{-s} = x_{-s}}).$
For the  ERGM$(\beta, t)$ from Definition \ref{def:ergm}, 
\begin{eqnarray*}
\lefteqn{q(x^{(s,1)}| {x_{-s}} ):= \exp\left\{\sum_{\ell=1}^k \beta_\ell t_\ell(x^{(s,1)})\right\} \times } \\ && \hspace{-4mm} 
\left( \exp\left\{ \sum_{\ell=1}^k \beta_\ell t_\ell(x^{(s,1)})\right\} +\exp\left\{\sum_{\ell=1}^k \beta_\ell t_\ell(x^{(s,0)})\right\} \right)^{-1} 
\end{eqnarray*}
and similarly the probability of the new edge being absent exchanges $1$ and $0$ in this formula
to give $q(x^{(s,0)}|{x_{-s}} )$. 
For
 $h: \{0,1\}^N \rightarrow \R$ let
$$\Delta_s h(x) = h( x^{(s,1)}) - h(x^{(s,0)}).$$
The generator $\T_{\beta, {t}} $ of this Markov process is the desired Stein operator and its expression simplifies to 
\begin{eqnarray} 
\T_{\beta, {t}} f(x)
	&=&  \frac{1}{N} \sum_{s\in[N]} \A^{(s)}_q f(x) 
	\label{eq:ergm_stein} 
	\end{eqnarray}
	with
	\begin{eqnarray}\label{eq:stein_component}
    \A^{(s)}_q f(x) &=& q(x^{(s,1)}|{x_{-s}} ) \Delta_s f(x) \nonumber  \\
    && + \left( f(x^{(s,0)}) - f(x)\right).
\end{eqnarray}
	When the ERGM is such that the  Markov process is  irreducible, then its stationary distribution is unique, and  if 
$\E_p[{\T}_{\beta, t}  f] = 0$ for all smooth test functions $f$, then $p$ is the distribution of ERGM$(\beta, t)$. Thus, the Stein operator characterises  ERGM$(\beta, t)$.
{{Moreover, for each $s \in [N],$
\begin{eqnarray}\label{meanzeroproperty} 
\mathbb{E}_{q} \T_{q}^{(s)}f = 0.
\end{eqnarray}
To see this,}  write 
\begin{eqnarray*}
\mathbb{E}_{q} \T_{q}^{(s)}f 
&= \sum_x {q(x_{-s})}  &\left(q(x^{(s,1)}| {x_{-s}} ) \T_{q}^{(s)}f(x^{(x,1)}) \right.\\ &&\left.  +q(x^{(s,0)}| {x_{-s}} )\T_{q}^{(s)}f(x^{(x,0)}) \right) .
\end{eqnarray*}
Substituting $x^{(s,1)}$ and $x^{(s,0)}$ in  \eqref{eq:stein_component} gives 
\begin{eqnarray*}
\lefteqn{q(x^{(s,1)}| {x_{-s}} )\T_{q}^{(s)}f(x^{(s,1)}) +q(x^{(s,0)}| {x_{-s}} )\T_{q}^{(s)}f(x^{(s,0)}) }\\
&=&q(x^{(s,1)}| {x_{-s}} )q(x^{(s,0)}| {x_{-s}} )\Big( f(x^{(s,0)}) - f(x^{(s,1)}) \\
&& + f(x^{(s,1)}) - f(x^{(s,0)})\Big)=0
\end{eqnarray*}
and Eq.\eqref{meanzeroproperty} follows.
} 


{Next}, we 
{introduce  {our}}  kernel Stein statistic for testing {the goodness-of-fit of an ERGM  based on a} single observed
network 
{as well as an estimator for it which is based on  re-sampling of edge indicators}.

\section{KERNEL STEIN STATISTICS from RE-SAMPLING}\label{sec:stein_operator}

\paragraph{Kernel Stein Statistics}
Based on the Stein operator representation Eq.\eqref{eq:ergm_stein}, we develop the kernel Stein statistics (KSS)\footnote{We 
avoid calling it a \emph{discrepancy}
since  our expectation is not taken over all ERGM samples as described in 
\cite{yang2018goodness}, {but instead based on a single network.}} for ERGMs. Similar to KSD in Eq.\eqref{eq:ksd}, we use the functions in a unit ball of an RKHS  {${\mathcal H}$}
as test functions.

The Stein operator in Eq.\eqref{eq:ergm_stein} can be written as expectation over edge variables $S$ with uniform probability $\P(S=s)=\frac{1}{N}, \forall s {\in [N]:=\{1, \ldots, N\}}$, independently of $x$,  namely 
\begin{align}
\A_q f(x) &= \sum_{s\in [N]} \P(S=s)\A^{(s)}_q f(x) =: {\E_S [ \A^{(S)}_q f(x)]} . \nonumber
\end{align}
Note that the expectation is taken over
$S$,  with the network  $x$ fixed {except for the coordinate $S$}.


After algebraic manipulations, Eq~\eqref{eq:stein_component} has the form
\begin{eqnarray}
 \lefteqn{\A_q^{(s)} f(x)}  \nonumber \\
 & \hspace{-0.2cm} = & \hspace{-0.13cm} q(x^{(s,1)}|{x_{-s}} )f (x^{(s,1)}) + q(x^{(s,0)}|{x_{-s}} )  f (x^{(s,0)})  - f(x) \nonumber\\
 & \hspace{-0.2cm} = & \hspace{-0.13cm} \E_{\{0,1\}}[f(X_{s},x_{-s}) | {x_{-s}} )] - f(x).\label{eq:stein_conditional_difference}
\end{eqnarray}
{Here $ \E_{\{0,1\}}$ refers to the expectation taken only over the value which ${X}_s$ takes on.}
Hence, 
\begin{align}\label{eq:stein_expectation}
\A_q f(x) 
&= \E_S \left[\E_{\{0,1\}}[f(X_{s},x_{-s}|x)]\right] - f(x)
. 
\end{align}
For a fixed {network}  $x$, we  seek a function $f \in \H$, s.t. $\|f\|_{\H}\leq 1$, that best distinguishes the difference in Eq.\eqref{eq:stein_expectation} {when $X$ does not have distribution $q$; this rationale} is similar {as for} Eq.\eqref{eq:ksd}. We
 define the {\it graph kernel Stein statistics} (gKSS) as
\begin{align}\label{eq:gkss}
    \operatorname{gKSS}(q;x) 
    & = \sup_{\|f\|_{\H}\leq 1} \Big|\E_S[\A^{(S)}_q f(x)] \Big|.
\end{align}
{It is often more convenient to consider $\operatorname{gKSS}^2(q;x) $. Let {the RKHS} $\H$
{have} kernel $K$ and inner product $\langle \cdot, \cdot \rangle_\H$.
By the reproducing property of RKHS functions, {as for Eq.\eqref{hform},} algebraic manipulation allows  the supremum to be computed in closed form:
\begin{align}\label{eq:gkss_quadratic_form}
{\operatorname{gKSS}}^2(q;x) =  \frac{1}{N^2} \sum_{s, s'\in [N]} h_x(s, s')
\end{align}
where 
$h_x(s, s') = \left\langle \A^{(s)}_q K(x,\cdot), \A^{(s')}_q K(\cdot,x)\right\rangle_{\H}.$
}

\paragraph{Stein Operator from Edge Re-sampling}
{When the distribution of $X$ is known, the} expectation in Eq.\eqref{eq:stein_expectation} can be computed for networks with a small number of vertices, but {when the number of vertices is large, exhaustive evaluation is computationally intensive.}
For a fixed network $x$, we propose the following randomised Stein operator via edge re-sampling. {This procedure mimics the {Markov process} which gives rise to the Stein operator.}  
Let $B$ be the {fixed} number of edges to be re-sampled. Our re-sampled Stein operator is
\begin{equation}\label{eq:resample_kss}
    \widehat{\A_q^B} f(x) = \frac{1}B \sum_{b\in [B]} \A^{(s_b)}_q f(x)
\end{equation}
where ${b \in B}$ and $s_b$ are edge samples from $\{1, \ldots, N\}$, {chosen uniformly with replacement, independent of each other and of $x$.}
The  expectation of  $ \widehat{\A_q^B} f(x)$ with respect to the re-sampling is 
$$
\E_B [ \widehat{\A_q^B} f(x)  ]
{ = \E_S [ \A_q^{(S)} f(x)]} 
= \A_q f(x). 
$$

We {introduce} the corresponding {re-sampling} gKSS:
\begin{align}\label{eq:gkss_resample}
    \widehat{\operatorname{gKSS}}(q;x) = \sup_{\|f\|_{\H}\leq 1} \Big|\frac{1}{B}\sum_{b\in [B]}\A^{(s_b)}_q f(x) \Big|.
\end{align}
{{The supremum in Eq.\eqref{eq:gkss_resample} is achieved by} $
f^*(\cdot) = {\frac{1}{B}\sum_b \A_q^{(s_b)}k(x,\cdot)} / {\left\| \frac{1}{B}\sum_a \A_q^{(s_a)}k(x,\cdot) \right\|}
$.} 
Similar algebraic manipulations as for Eq.\eqref{eq:gkss_quadratic_form} {yield} 

\begin{align}\label{eq:gkss_resample_quadratic_form}
    \widehat{\operatorname{gKSS}}^2(q;x) =  \frac{1}{B^2} \sum_{b, b'\in [B]} h_x(s_b, s_{b'}).
\end{align}

\section{GOODNESS-OF-FIT TEST with KERNEL STEIN STATISTICS}\label{sec:gof}
\subsection{Goodness-of-fit Testing Procedures}
{W}e now describe the proposed procedure to assess the goodness-of-fit of an ERGM for a single network observation.
The ERGM can be readily simulated from an unnormalised density via MCMC, {see for example}  \cite{morris2008ergm}. {Suppose that $q$ is the distribution of ERGM$(\beta, t)$ and $x$ is the observed network for which we want to assess the fit to $q$}. 
We simulate {independent} networks  $z_1,\dots,z_m \sim q$  and compare the {observed} $\widehat{\operatorname{gKSS}}^2(q;x)$ with 
the {set of} $\widehat{\operatorname{gKSS}}^2(q;z_i), {i=1, \ldots, m}$ {using a Monte Carlo test. As $\operatorname{gKSS}$ assesses the deviation from the null distribution, the test is one-sided; we reject the null model when the observed  $\widehat{\operatorname{gKSS}}$ is large}. 
The detailed test procedure is given in Algorithm \ref{alg:kernel_stein_monte_carlo}.

\begin{algorithm}[th]
   \caption{Kernel Stein Test for ERGM}
   \label{alg:kernel_stein_monte_carlo}
\begin{algorithmic}[1]
\renewcommand{\algorithmicrequire}{\textbf{Input:}}
\renewcommand{\algorithmicensure}{\textbf{Objective:}}
\REQUIRE~~\\
    Observed network $x$; \\
    Null model $q$; \\ 
    RKHS Kernel $K$; \\
    Re-sample size $B$; \\
    Confidence level $\alpha$;\\ 
    {Number of simulated networks} $m$;
\ENSURE~~\\
Test $H_0: x\sim q$ versus $H_1: x \not\sim q$.
\renewcommand{\algorithmicensure}{\textbf{Test procedure:}}
\ENSURE~~\\
\STATE Sample $\{s_1,\dots,s_B\}$ with replacement uniformly from $[N]$.
\STATE Compute $\tau =\widehat{\operatorname{gKSS}}^2(q;x)$ in Eq.\eqref{eq:gkss_resample_quadratic_form}.
\STATE Simulate $z_1,\dots,z_m \sim q$.
\STATE Compute $\tau_i =\widehat{\operatorname{gKSS}}^2(q;z_i)$ in Eq.\eqref{eq:gkss_resample_quadratic_form}. 
{again with re-sampling, choosing new samples 
 $\{s_{1, i},\dots,s_{B,i}\}$ uniformly from $[N]$} with replacement.
\STATE {Estimate} the {empirical}  $(1-\alpha)$-quantile $\gamma_{1-\alpha}$ of $\tau_1,\dots,\tau_m$.
\renewcommand{\algorithmicrequire}{\textbf{Output:}}
\REQUIRE~~\\
Reject $H_0$ if $\tau > \gamma_{1-\alpha}$; otherwise do not reject.
\end{algorithmic}
\end{algorithm}

\subsection{Kernel Choices}
\paragraph{Graph kernels}
Apart from using simple kernels between adjacency vectors {in} {$ \{0,1\}^N$}, we apply graph kernels that take into account graph topology via various measures. Various aspects of graph kernels have been studied  \citep{borgwardt2005shortest,vishwanathan2010graph,shervashidze2011weisfeiler,sugiyama2015halting}. We provide a brief review of some graph kernels in the Supplementary Material \ref{supp:graph_kernel}. In our implementation in R, we utilise the $\mathtt{ergm}$ package related to  \cite{morris2008ergm} for simulating ERGMs and the $\mathtt{graphkernels}$ package associated with \cite{sugiyama2018graphkernels} for computing relevant graph kernels.

\paragraph{Vector-valued RKHS}
As the operator in Eq.\eqref{eq:stein_expectation} has embedded a notion of conditional probability, we 
may {tailor} 
the RKHS {accordingly}. To incorporate the notion of $x_{s}$ conditioning on $x_{-s}$, we consider a  separate treatment of $x_s$ and $x_{-s}$ and introduce a vector-valued reproducing kernel Hilbert Space (vvRKHS). Similar constructions are  studied in \cite{jitkrittum2020testing} when testing goodness-of-fit for conditional densities. In the Supplementary Material \ref{supp:vvRKHS}, we provide a review on the vvRKHS we use {as}  graph kernels;
{further details
can be found in} \cite{caponnetto2008universal}, \cite{carmeli2010vector}, or \cite{sriperumbudur2011universality}.

\subsection{Theoretical Properties of gKSS}

{Let $X \sim q$ and  $Y \sim {\tilde{q}}$, where $ \tilde{q}$ is the distribution of an appropriately chosen ER graph.} Our theoretical approximation argument has three steps:  The first step, Theorem \ref{thm:asy_null_first}, is to approximate gKSS$(q, X)$ by  gKSS$({\tilde{q}}, Y)$, {with an explicit bound on the approximation error, as the number of vertices $n \rightarrow \infty$. 
Secondly, Theorem \ref{normalapprox} provides a normal approximation for gKSS$({\tilde{q}}, Y)^2$ of the approximating Bernoulli random graph {as $n \rightarrow \infty$}, again with an explicit bound, so that
{approximate confidence bounds for the test under the null hypothesis can be obtained.}
Finally, {a normal approximation for} 
$\widehat{\operatorname{gKSS}}(q, X)^2$ {to a normal distribution with approximate mean  gKSS$(q, X)$, as $B \rightarrow \infty$ with $\lfloor B/N \rfloor$ fixed, is
given in Proposition \ref{bootstrapnormal}, again with an explicit error bound. These three results combined provide explicit control} 
of the type 1 error.} 

{In \cite{chatterjee2013estimating} and \cite{reinert2019approximating} it is shown that under some conditions on the parameters, an ERGM$(\beta,t)$ is close to an approxpriately chosen Bernoulli random graph, as follows. }
{For $a\in[0,1]$, define the following}
functions \citep{bhamidi2011mixing,eldan2018exponential}, 
{with the notation in Definition~\ref{def:ergm}  for ERGM$(\beta, t)$}:
$$\Phi(a) := \sum_{l=1}^{k} \beta_l e_l a^{e_l -1}, \quad
\varphi(a) 
:= \frac{1+ \tanh(\Phi(a))}{2} 
$$
where $e_l$ is the number of edges in $H_l$. {For a polynomial $f(x) = \sum_{\ell=0}^k c_\ell x^\ell$ set 
$| f(x) | :=\sum_{\ell=1}^k |c_\ell| \, e_\ell \, x^{\ell}$. Moreover, $|| f ||$ denotes the supremum norm.}
The class of  ERGM$(\beta,t)$ {in this section are assumed to}   satisfy the following {standard} technical assumption.
\begin{Assumption}\label{assum:er_approx}
$\operatorname{(1)}$ $\frac{1}{2}|\Phi|'(1) < 1$.
$\operatorname{(2)}$ $\exists a^{*} \in [0,1]$ that solves the equation $\varphi(a^{*}) = a^{*}$.
\end{Assumption}
{The value $a^*$ will be the edge probability in the approximating Bernoulli random graph, $\operatorname{ER}(a^*)$. Then the following result holds.} 
\begin{Proposition}\label{prop:er_approx}\textnormal{[{Theorem 1.7 and} Corollary 1.10 \citep{reinert2019approximating}]}
Let  {$\operatorname{ERGM}(\beta,t)$} satisfy Assumption \ref{assum:er_approx}. Let $X \sim \operatorname{ERGM}(\beta)$ and $Y \sim \operatorname{ER}(a^*)$.
Then for $h:\{0,1\}^N\to \R$,
$$
|\E h(X)-\E h(Y) |
	\leq || {\Delta h} || N \frac{C_{a^*}(\beta, t, h) }{\sqrt{n}} \sum_{\ell=2}^k \beta_\ell.
$$
Here $C_{a^*}(\beta, t, h)$ is an explicit constant. 
\end{Proposition}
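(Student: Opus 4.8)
Since the statement is a reformulation of Theorem~1.7 and Corollary~1.10 of \cite{reinert2019approximating}, the shortest route is to invoke those results directly; to indicate why they hold, I would run the underlying Stein's-method argument. The plan is to compare $\operatorname{ERGM}(\beta,t)$ with $\operatorname{ER}(a^*)$ using the Glauber-dynamics Stein operator of the \emph{simpler} model. Write $\mathcal{A}_{a^*}$ for the operator obtained from Eq.\eqref{eq:ergm_stein}--Eq.\eqref{eq:stein_component} with the constant conditional edge probability $q(x^{(s,1)}\mid x_{-s})\equiv a^*$, and $\mathcal{A}_q$ for the operator of $\operatorname{ERGM}(\beta,t)$. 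First I would solve the Stein equation $\mathcal{A}_{a^*}g = h - \mathbb{E}[h(Y)]$ via the semigroup representation $g(x) = -\int_0^\infty\!\big(\mathbb{E}[h(Z^x_t)] - \mathbb{E}[h(Y)]\big)\,\mathrm{d}t$, where $(Z^x_t)_{t\ge 0}$ is the $\operatorname{ER}(a^*)$ Glauber process started from $x$, and establish the regularity bound $\|\Delta g\|\le C\,N\,\|\Delta h\|$ by a one-step (path-)coupling of two copies of this process: the uniform geometric contraction of the ER Glauber dynamics makes the integrand decay like $e^{-ct/N}$, and the $1/N$ transition rate is then responsible for the factor $N$.

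Secondly, using the Stein identity $\mathbb{E}[\mathcal{A}_q g(X)]=0$ for $X\sim\operatorname{ERGM}(\beta,t)$ from Eq.\eqref{meanzeroproperty}, I would write
$$
\mathbb{E}[h(X)] - \mathbb{E}[h(Y)] \;=\; \mathbb{E}[\mathcal{A}_{a^*}g(X)] \;=\; \mathbb{E}\big[(\mathcal{A}_{a^*}-\mathcal{A}_q)g(X)\big],
$$
and observe from Eq.\eqref{eq:stein_component} that the additive term $f(x^{(s,0)})-f(x)$ is model-free, so
$$
(\mathcal{A}_{a^*}-\mathcal{A}_q)g(x) \;=\; \frac1N\sum_{s\in[N]}\big(a^* - q(x^{(s,1)}\mid x_{-s})\big)\,\Delta_s g(x).
$$
Bounding $|\Delta_s g(x)|\le\|\Delta g\|\le C N\|\Delta h\|$ and using exchangeability, the problem reduces to controlling $\mathbb{E}\,\big|a^* - q(X^{(s,1)}\mid X_{-s})\big|$ for a single fixed edge $s$.

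Thirdly, the conditional edge probability is explicit: $q(x^{(s,1)}\mid x_{-s}) = \tfrac12\big(1+\tanh(\tfrac12\sum_{\ell=1}^k\beta_\ell\,\Delta_s t_\ell(x))\big)$. The increment $\Delta_s t_\ell$ concentrates, uniformly in $s$, around a value for which the argument of $\tanh$ converges to $\Phi(a^*)$ (the factor-of-two scaling convention for $t_H$ is exactly what makes the constants line up), so the fixed-point identity $a^* = \varphi(a^*) = \tfrac12(1+\tanh\Phi(a^*))$ from Assumption~\ref{assum:er_approx}(2) makes $a^*$ the limiting value, while Assumption~\ref{assum:er_approx}(1), $\tfrac12|\Phi|'(1)<1$, is precisely the contraction that lets one transport the concentration of the subgraph counts $t_\ell$ from the ER law to the ERGM law --- this is the quantitative fast-mixing input of \cite{chatterjee2013estimating} and \cite{reinert2019approximating}. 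A first-order Taylor expansion of $\tanh$ together with an $O(n^{-1/2})$ concentration bound for $\sum_{\ell\ge 2}\beta_\ell\,\Delta_s t_\ell$ --- only the $\ell\ge 2$ terms deviate from the ER model, since $H_1$ is the single edge absorbed into $a^*$ --- then yields $\mathbb{E}\,|a^* - q(X^{(s,1)}\mid X_{-s})| \le C_{a^*}(\beta,t,h)\,n^{-1/2}\sum_{\ell\ge2}\beta_\ell$. Combining the three estimates gives the claimed inequality, with $\|\Delta h\|\,N$ coming from the regularity of $g$.

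I expect the third step to be the main obstacle: obtaining an \emph{explicit} $n^{-1/2}$ concentration rate, with an explicit constant, for the rescaled subgraph-count increments under the ERGM law, which is exactly where Assumption~\ref{assum:er_approx} is needed. The coupling bound on $\|\Delta g\|$ is a secondary technical point, but it also needs care because of the slow $1/N$ time scaling of the Glauber dynamics.
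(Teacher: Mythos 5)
This proposition is not proved in the paper at all: it is imported verbatim from Theorem~1.7 and Corollary~1.10 of \cite{reinert2019approximating}, so the only "proof" the paper offers is the citation. Your sketch is a reasonable reconstruction of a generator-method argument, but it inverts the architecture of the actual proof in a way that matters. Reinert and Ross solve the Stein equation for the \emph{ERGM} Glauber dynamics (this is where Assumption~\ref{assum:er_approx}(1), the contraction condition $\tfrac12|\Phi|'(1)<1$, is spent: it gives fast mixing of the ERGM chain and hence the regularity bound on the Stein solution), and then evaluate the operator difference under the \emph{Bernoulli} law: the key estimate, quoted in Supplementary Material~\ref{supp:proofs} as $\frac1N\sum_s \E\,|(\A_q^{(s)}-\A_{\tilde q}^{(s)})f(Y)| \le \|\Delta f\|\binom{n}{2}\frac{C(\beta,t)}{\sqrt n}$ with $Y\sim\tilde q=\operatorname{ER}(a^*)$, only requires concentration of the subgraph-count increments $\Delta_s t_\ell$ under independent edges, which is elementary. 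You do the opposite: you solve the Stein equation for $\operatorname{ER}(a^*)$ (whose regularity bound $\|\Delta g\|\le N\|\Delta h\|$ is indeed easy, as you say) and evaluate under the ERGM law, which forces you to control $\E\,|a^* - q(X^{(s,1)}\mid X_{-s})|$ for $X\sim\operatorname{ERGM}(\beta,t)$.

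That relocation is where your argument has a genuine gap rather than just a stylistic difference. An explicit $n^{-1/2}$ concentration bound for $\sum_{\ell\ge2}\beta_\ell\Delta_s t_\ell(X)$ \emph{under the ERGM law} is not an input one gets for free from Assumption~\ref{assum:er_approx}; the standard way to prove such concentration is to transfer it from the Bernoulli model using precisely the comparison result you are trying to establish, so as written your step three is close to circular. It can be rescued by invoking the burn-in/fast-mixing concentration of \cite{bhamidi2011mixing} directly for local statistics of the ERGM chain, but at that point you have already paid for the ERGM mixing machinery, and the Reinert--Ross arrangement (hard Stein equation, easy concentration) is the cleaner accounting. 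Your first two steps, the explicit form $q(x^{(s,1)}\mid x_{-s})=\tfrac12\bigl(1+\tanh\bigl(\tfrac12\sum_\ell\beta_\ell\Delta_s t_\ell(x)\bigr)\bigr)$, and the role of the fixed point $a^*=\varphi(a^*)$ are all correct; only the load-bearing concentration step is left unsupported.
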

Proposition \ref{prop:er_approx} shows, {that for large $n$,}  the ERGM can be approximated {well} by an appropriate ER graph  {for} test functions $h$ which {are} properly scaled. {In particular, if $H$ is a connected graph and}  
$h(x) = t(H,x)n^{-|V(H)|}$, 
 then there is  {an explicit}  constant $C=C(\beta, t, H)$ such that
$
\left|\E h(X) - \E h(Y) \right| \leq C / \sqrt{n}.
$
{This result translates into an approximation for the gKSS, as follows.}

{\begin{Theorem}\label{thm:asy_null_first}
Let $q(x)=\operatorname{ERGM}(\beta, t)$ {satisfy} Assumption \ref{assum:er_approx} and let ${\tilde q}$ denote the distribution of {ER$(a^*)$.}
For $f\in \H$ equipped with kernel $K$, let
$
f_x^*(\cdot) = \frac{ (\A_q - \A_{\tilde q} ) K(x,\cdot)}{\left\|(\A_q - \A_{\tilde q} ) K(x,\cdot) \right\|_{\H}}.
$
Then there is an explicit constant $C=C(\beta, t, {K})$  such that for all $\epsilon > 0,$
\begin{eqnarray*}
\lefteqn{
\P ( |  \gKSS (q,X)  - \gKSS ( {\tilde q}, Y) | \, >  \,  \epsilon)}\\
&\le& \Big\{  || \Delta (\gKSS(q, \cdot))^2 || ( 1 +  || \Delta \gKSS(q, \cdot) || )   \\
&&  +  4 \sup_x { (|| \Delta f_x^*||^2) }  \Big\} 
{n \choose 2}  \frac{C}{{\epsilon^2 \sqrt{n}}} . 
\end{eqnarray*}
\end{Theorem}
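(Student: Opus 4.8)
The plan is to relate the difference $\gKSS(q,X) - \gKSS(\tilde q, Y)$ to a quantity that Proposition~\ref{prop:er_approx} can control, and then use a concentration/coupling argument together with Markov's inequality. First I would use the triangle inequality for suprema: since both $\gKSS(q,X)$ and $\gKSS(\tilde q, Y)$ are obtained as $\sup_{\|f\|_\H\le 1}|\cdot|$, for a \emph{fixed} realisation $X=Y=x$ one has
\[
\bigl| \gKSS(q,x) - \gKSS(\tilde q, x) \bigr| \le \sup_{\|f\|_\H\le 1} \bigl| \A_q f(x) - \A_{\tilde q} f(x) \bigr| = \bigl\| (\A_q - \A_{\tilde q}) K(x,\cdot) \bigr\|_\H,
\]
where the last equality is the usual RKHS closed-form computation (as in Eq.~\eqref{eq:gkss_quadratic_form}), and the optimiser is exactly the $f_x^*$ defined in the statement. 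So on a coupling where $X$ and $Y$ are the \emph{same} network the problem reduces to bounding $\|(\A_q - \A_{\tilde q})K(x,\cdot)\|_\H = (\A_q - \A_{\tilde q}) f_x^*(x)$, which is a scalar function of the network whose $\Delta$-oscillation is controlled by $\sup_x \|\Delta f_x^*\|^2$ (one factor from each of $\A_q$, $\A_{\tilde q}$, each of which is an average over the $N$ edges of $\Delta_s f_x^*$ terms weighted by conditional probabilities in $[0,1]$).

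Of course $X$ and $Y$ are not literally equal — they have laws $q$ and $\tilde q$ — so the second ingredient is to pass from ``fixed $x$'' to ``distribution''. Here I would split
\[
\bigl|\gKSS(q,X) - \gKSS(\tilde q, Y)\bigr| \le \bigl|\gKSS(q,X) - \gKSS(\tilde q, X)\bigr| + \bigl|\gKSS(\tilde q, X) - \gKSS(\tilde q, Y)\bigr|,
\]
apply Markov's inequality at level $\epsilon/2$ (or directly bound the probability that the left side exceeds $\epsilon$ by $\epsilon^{-2}$ times a second-moment-type quantity — the $\epsilon^2$ in the denominator of the claimed bound signals a Chebyshev/Markov step on a squared quantity). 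For the first term I use the pointwise bound above: its expectation, or rather the expectation of its square, is $\E\,\|(\A_q-\A_{\tilde q})K(X,\cdot)\|_\H^2$, which is an expectation of a function $h$ of the network, and Proposition~\ref{prop:er_approx} (applied with this $h$, whose $\|\Delta h\|$ is bounded by $4\sup_x\|\Delta f_x^*\|^2$) bounds it by $\binom{n}{2}\, C/\sqrt n$ up to the stated structural factors $\sum_{\ell\ge 2}\beta_\ell$ absorbed into $C$. For the second term, $\gKSS(\tilde q, \cdot)^2$ is itself a fixed scalar function $h$ of the network (it does not depend on which law the network is drawn from), so $|\E_q h - \E_{\tilde q} h|$ is again directly controlled by Proposition~\ref{prop:er_approx}; the oscillation factor $\|\Delta h\|$ for $h=\gKSS(q,\cdot)^2$ is exactly the $\|\Delta(\gKSS(q,\cdot))^2\|(1+\|\Delta\gKSS(q,\cdot)\|)$ appearing in the bound (one needs $|a^2-b^2|=|a-b||a+b|$ together with the $\Delta$-product rule $\Delta(uv) = u\,\Delta v + v^{(s,0)}\Delta u$ to pass from oscillation of $\gKSS^2$ to oscillation of $\gKSS$, which produces the extra $1 + \|\Delta\gKSS\|$ factor).

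Assembling: the probability that $|\gKSS(q,X)-\gKSS(\tilde q,Y)|>\epsilon$ is bounded, via Markov on each piece, by $\epsilon^{-2}$ times the sum of the two second-moment bounds, each of which Proposition~\ref{prop:er_approx} gives as $\binom{n}{2} C/\sqrt n$ times the appropriate $\Delta$-oscillation constant; collecting the oscillation constants $\|\Delta(\gKSS(q,\cdot))^2\|(1+\|\Delta\gKSS(q,\cdot)\|)$ and $4\sup_x\|\Delta f_x^*\|^2$ into the braces and the remaining $\beta$-dependence into the constant $C$ gives precisely the displayed inequality. The main obstacle I expect is the bookkeeping in the second step: carefully verifying that $\gKSS(\tilde q,\cdot)^2$ and $\|(\A_q-\A_{\tilde q})K(\cdot,\cdot)\|_\H^2$ really are admissible test functions $h$ for Proposition~\ref{prop:er_approx} with the claimed bounds on $\|\Delta h\|$ — in particular controlling $\|\Delta f_x^*\|$ uniformly in $x$, since $f_x^*$ is a \emph{normalised} object and differentiating through the normalisation (the $\Delta$ of a ratio) is where the quadratic $\|\Delta f_x^*\|^2$ and the uniform $\sup_x$ genuinely enter — together with tracking how the $\sum_{\ell\ge 2}\beta_\ell$ and $C_{a^*}(\beta,t,h)$ factors from Proposition~\ref{prop:er_approx} get folded into the single constant $C(\beta,t,K)$.
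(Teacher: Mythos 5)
Your overall architecture --- a triangle inequality separating an ``operator difference at a fixed network'' term from a ``same statistic, different network laws'' term, a Chebyshev step at level $\epsilon$ (explaining the $\epsilon^{-2}$), and Proposition~\ref{prop:er_approx} applied to $h=\gKSS(q,\cdot)^2$ for the law-difference term --- is the same as the paper's, and your pointwise identity $|\gKSS(q,x)-\gKSS(\tilde q,x)|\le\sup_{\|f\|\le 1}|(\A_q-\A_{\tilde q})f(x)|=(\A_q-\A_{\tilde q})f_x^*(x)$ is exactly the paper's first step. One structural difference: you pivot at $\gKSS(\tilde q,X)$, whereas the paper pivots at $\gKSS(q,Y)$, i.e.\ writes $|\gKSS(q,X)-\gKSS(\tilde q,Y)|\le|\gKSS(q,X)-\gKSS(q,Y)|+|\gKSS(q,Y)-\gKSS(\tilde q,Y)|$. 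This is why the stated bound carries $\|\Delta(\gKSS(q,\cdot))^2\|$ rather than $\|\Delta(\gKSS(\tilde q,\cdot))^2\|$ (your decomposition would naturally produce the latter; your write-up silently switches from $\tilde q$ to $q$ at that point), and it places the operator-difference term under the law $\tilde q$, where the concentration needed in the next point is actually available.

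The genuine gap is in your treatment of the operator-difference term. Proposition~\ref{prop:er_approx} bounds the \emph{difference} $|\E h(X)-\E h(Y)|$ between the two laws; it does not bound $\E_q\|(\A_q-\A_{\tilde q})K(X,\cdot)\|_{\H}^2$ itself, which is what your Markov step requires. Applying it to $h(x)=\|(\A_q-\A_{\tilde q})K(x,\cdot)\|^2$ merely transfers the problem from $X\sim q$ to $Y\sim\tilde q$; you still need the base estimate that this quantity is small under $\tilde q$. That estimate is the quantitative heart of the proof: the paper extracts from the proof of Lemma~2.4 / Theorem~1.7 of \cite{reinert2019approximating} (not from the statement of Proposition~\ref{prop:er_approx}) the $L^1$ bound
\[
\frac1N\sum_{s\in[N]}\E\,\babs{\A_q^{(s)}f(Y)-\A_{\tilde q}^{(s)}f(Y)}\;\le\;\|\Delta f\|\,{n\choose 2}\,\frac{C(\beta,t)}{\sqrt n},
\]
which rests on concentration of the conditional probabilities $q(x^{(s,1)}\mid x_{-s})$ around $a^*$ under the Bernoulli graph; combined with the crude bound $|(\A_q^{(s)}-\A_{\tilde q}^{(s)})f|\le 2\|\Delta f\|$ this yields the second-moment bound $\E[((\A_q-\A_{\tilde q})f(Y))^2]\le 2\|\Delta f\|^2{n\choose 2}C/\sqrt n$ that feeds Chebyshev. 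Without this ingredient your first term is not controlled. A smaller point on the law-difference term: the paper needs $\E[(\gKSS(q,X)-\gKSS(q,Y))^2]\le|\E\,\gKSS(q,X)^2-\E\,\gKSS(q,Y)^2|$, obtained via Cauchy--Schwarz and solving a quadratic inequality using $|\gKSS|\le 1$; the factoring $|a^2-b^2|=|a-b|\,|a+b|$ you invoke goes in the wrong direction for that purpose.
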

}
{As our goodness-of-fit test statistic is based on the square of the $\operatorname{gKSS}$,  the asymptotic behaviour of $
 \operatorname{gKSS}^2 ( {\tilde q}, Y)$ is of interest. 
 To approximate the distribution of $\operatorname{gKSS}^2$ under the null hypothesis we make some additional assumptions on kernel $K$ of RKHS.
 
 
\begin{Assumption}\label{assum:er_approx_kernel}
Let $\H$ be the RKHS 
associated
with the  kernel $K: { \{ 0, 1 \}^N} \times  { \{ 0, 1 \}^N}\to \R$ and  for $s\in [N]$ let  $\H_s$ be the RKHS 
associated with the kernel $l_s: \{ 0, 1 \} \times  \{ 0, 1 \} \to \R$. Then \vspace{-2mm} 
\begin{enumerate}[i)]
    \item \label{ass21} $\H$ is a tensor product RKHS, $\H = \otimes_{s \in [n]}\H_s $; \vspace{-2mm} 
    \item \label{ass22} $k$ is a product kernel, $k(x, y) = \otimes_{s \in [N]} l_s(x_s, y_s)$; \vspace{-2mm} 
    \item \label{ass23} $ \langle l_s (x_s, \cdot), l_s (x_s, \cdot) \rangle_{\H_s}  =1$; \vspace{-2mm} 
    \item \label{ass24} $l_s(1, \cdot) - l_s(0, \cdot) \ne 0$ for all $s \in [N]$. \vspace{-2mm} 
\end{enumerate}
\end{Assumption}
These assumptions are satisfied for example for the suitably standardised Gaussian kernel $K(x,y) = \exp\{ - \frac{1}{\sigma^2} \sum_{s \in [N]} (x_s - y_s)^2\} $.

Letting 
$|| \cdot ||_1$ denote  $L_1$-distance, and $\mathcal L$ denote the law of a random variable, {in Supplementary Material \ref{supp:proofs}}, we show the following normal approximation.}

{ \begin{Theorem} \label{normalapprox}
 Assume that the conditions i) - iv) in Assumption \ref{assum:er_approx_kernel} hold. 
 Let $\mu = \E [  \operatorname{gKSS}^2 ( {\tilde q}, Y)] $ and $\sigma^2 = \Var [ \operatorname{gKSS}^2 ( {\tilde q}, Y)]. $ Set 
 $W = \frac{1}{\sigma} (  \operatorname{gKSS}^2 ( {\tilde q}, Y)]  - \mu)$ and let  $Z$ denote a standard normal variable, Then there is an explicit constant $C = C(a^*, l_s, s \in [N]) $ such that 
\begin{equation*} 
    || {\mathcal L}(W) - {\mathcal L}(Z)||_1  \le \frac{C}{\sqrt{N}}. 
    \end{equation*} 
 \end{Theorem}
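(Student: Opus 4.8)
The plan is to exploit the tensor-product structure of the kernel (Assumption \ref{assum:er_approx_kernel}) to reduce $\operatorname{gKSS}^2({\tilde q}, Y)$, for $Y\sim{\tilde q}=\operatorname{ER}(a^*)$, to an explicit degree-two polynomial in the i.i.d.\ Bernoulli edge indicators $Y_1,\dots,Y_N$, and then to run a Berry--Esseen argument on its dominant linear part. First I would evaluate the Stein components at the ER graph. Since under ${\tilde q}$ the conditional edge probability is the constant $a^*$, Eq.\eqref{eq:stein_conditional_difference} together with Assumption \ref{assum:er_approx_kernel} i)--iii) gives, writing $u_s := l_s(1,\cdot) - l_s(0,\cdot)\in\H_s$ (nonzero by iv)) and $\rho_s(b) := l_s(1,b) - l_s(0,b)$,
\[
\A^{(s)}_{\tilde q} K(x,\cdot) = (a^*-x_s)\, u_s \otimes \bigotimes_{t\ne s} l_t(x_t,\cdot).
\]
Taking inner products and using $\langle l_t(x_t,\cdot), l_t(x_t,\cdot)\rangle_{\H_t}=1$ yields $h_x(s,s)=(a^*-x_s)^2\|u_s\|_{\H_s}^2$ and, for $s\ne s'$, $h_x(s,s')=(a^*-x_s)(a^*-x_{s'})\,\rho_s(x_s)\,\rho_{s'}(x_{s'})$. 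Setting $W_s:=(a^*-Y_s)\rho_s(Y_s)$, $V_s:=(a^*-Y_s)^2\big(\|u_s\|^2-\rho_s(Y_s)^2\big)\ge 0$ (Cauchy--Schwarz, using iii)) and $T:=\sum_{s\in[N]}W_s$, summation over $(s,s')$ gives the key identity
\[
N^2\,\operatorname{gKSS}^2({\tilde q}, Y)=T^2+\sum_{s\in[N]}V_s .
\]
Because $Y_1,\dots,Y_N$ are i.i.d.\ $\operatorname{Bern}(a^*)$, the $W_s$ are independent across $s$, as are the $V_s$, and all are bounded by constants depending only on $a^*$ and the $l_s$.

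Next I would isolate the leading fluctuation. A short computation gives $\mu_T:=\E[T]=-a^*(1-a^*)\sum_s\|u_s\|^2$, which is nonzero with $|\mu_T|\asymp N$. Expanding $T^2=\mu_T^2+2\mu_T(T-\mu_T)+(T-\mu_T)^2$ and centering,
\[
N^2\big(\operatorname{gKSS}^2({\tilde q},Y)-\mu\big)=L+R,\qquad L:=2\mu_T\!\sum_{s}(W_s-\E W_s),
\]
where $R:=\big((T-\mu_T)^2-\Var(T)\big)+\sum_s(V_s-\E V_s)$ and $\mu,\sigma$ are as in the theorem statement. Here $L$ is a centered sum of $N$ independent bounded summands with $\Var(L)=4\mu_T^2\Var(T)\asymp N^3$, while a standard fourth-moment estimate for the square of a sum of independent centered bounded variables gives $\Var\big((T-\mu_T)^2\big)=O(N^2)$ and $\Var\big(\sum_s V_s\big)=O(N)$, hence $\Var(R)=O(N^2)$ and $|\Cov(L,R)|=O(N^{5/2})$. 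Consequently $\sigma^2=\Var[\operatorname{gKSS}^2({\tilde q},Y)]=N^{-4}\big(\Var(L)+\Var(R)+2\Cov(L,R)\big)=N^{-4}\Var(L)\,(1+O(N^{-1/2}))\asymp N^{-1}$.

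Finally I would run the normal approximation. Write $W=c\,W_1+E$ with $W_1:=L/\sqrt{\Var(L)}$, $c:=\sqrt{\Var(L)}/(\sigma N^2)$ and $E:=R/(\sigma N^2)$; by the previous step $c=1+O(N^{-1/2})$ and $\E|E|\le\sqrt{\Var(R)}/(\sigma N^2)=O(N^{-1/2})$. Since $W_1$ is a normalised sum of $N$ independent centered bounded variables, a Berry--Esseen bound in $L^1$-Wasserstein distance, e.g.\ via Stein's method for independent summands \citep{chen2010}, gives $\|{\mathcal L}(W_1)-{\mathcal L}(Z)\|_1=O\big(\sum_s\E|W_s-\E W_s|^3/(\sum_s\Var(W_s))^{3/2}\big)=O(N^{-1/2})$. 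Combining this with the elementary estimates $\|{\mathcal L}(cW_1+E)-{\mathcal L}(cW_1)\|_1\le\E|E|$, $\|{\mathcal L}(cW_1)-{\mathcal L}(cZ)\|_1=c\,\|{\mathcal L}(W_1)-{\mathcal L}(Z)\|_1$ and $\|{\mathcal L}(cZ)-{\mathcal L}(Z)\|_1=|c-1|\,\E|Z|$ via the triangle inequality yields $\|{\mathcal L}(W)-{\mathcal L}(Z)\|_1=O(N^{-1/2})$ with a constant depending only on $a^*$ and $l_s,\ s\in[N]$.

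The main obstacle is the variance lower bound $\sigma^2\asymp N^{-1}$, equivalently the claim that $L$ is genuinely the leading-order term: this uses $u_s\ne 0$ (Assumption \ref{assum:er_approx_kernel} iv)) to guarantee $\mu_T\ne 0$, and it needs a uniform-in-$(s,n)$ non-degeneracy so that $\|u_s\|^2$ and $N^{-1}\sum_s\Var(W_s)$ stay bounded away from $0$ and $\infty$ (automatic for the standardised Gaussian kernel, where all $l_s$ coincide; otherwise this is precisely what the constant $C(a^*,l_s)$ encodes). The fourth-moment bounds and the Berry--Esseen step are then routine. In the degenerate case $L\equiv 0$ (every $W_s$ almost surely constant) the same scheme goes through with $\sum_s(V_s-\E V_s)$ playing the role of the leading linear term, giving the identical rate $O(N^{-1/2})$.
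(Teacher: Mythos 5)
Your proposal is correct, and it reaches the same $O(N^{-1/2})$ rate as the paper by a genuinely different route. The paper also starts from the tensor-product reduction of $\operatorname{gKSS}^2(\tilde q, Y)$ to a quadratic form in the centred indicators $(a^*-Y_s)$, but it then treats the $N^2$ summands $X_{(s,s')}$ as a locally dependent family (each $X_\alpha$ independent of $X_\beta$ unless the index pairs intersect) and invokes the local-dependence normal approximation of Stein's method (Theorem 4.13 of Chen--Goldstein--Shao) under condition (LD1), with the order of the bound extracted from crude moment estimates on $\xi_\alpha,\eta_\alpha$ and an order-of-magnitude analysis showing $\sigma^2\asymp N^{-1}$. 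You instead exploit the exact algebraic identity $N^2\operatorname{gKSS}^2 = T^2+\sum_s V_s$ with $T$ a sum of \emph{independent} bounded variables, linearise $T^2$ about its mean $\mu_T\ne 0$ (a H\'ajek-projection argument), show the linear term dominates the variance, and finish with a classical Wasserstein Berry--Esseen bound for independent summands plus elementary coupling estimates for the scaling and remainder. This buys a more elementary and more transparent proof: it makes explicit \emph{why} the limit is Gaussian rather than a weighted chi-square (the non-vanishing of $\mu_T$, which traces back to Assumption \ref{assum:er_approx_kernel} iv)), and it handles the degenerate case cleanly. The paper's local-dependence route is more generic and would survive in settings where the quadratic form does not factor so neatly. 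Two remarks: (i) your per-coordinate evaluation $h_x(s,s')=(a^*-x_s)(a^*-x_{s'})\rho_s(x_s)\rho_{s'}(x_{s'})$ for $s\ne s'$ is the correct factorisation of the tensor-product inner product (the paper's displayed form, which pairs $u_s$ with $u_{s'}$ across coordinate spaces, is written loosely); (ii) your variance lower bound, like the paper's, rests on a uniform non-degeneracy of $\Var(W_s)$ that Assumption \ref{assum:er_approx_kernel} iv) does not strictly deliver on its own --- you flag this honestly and fold it into the constant, which is no less rigorous than what the paper does.
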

 More details on $\mu$ and $\sigma^2$ are given in the Supplementary Material {\ref{supp:proofs}}. 
 This normal approximation could also be used to assess the asymptotic distribution under an  alternative $x \sim p$ where $p(x)=\operatorname{ERGM}(\beta',t')$ satisfies Assumption 1 with edge probability $b^*$ and $b^* \neq a^*$. Then asymptotically we can compare the corresponding  normal random variables with different means. 
 }

 For the final step, the re-sampling version, let ${k_s}$ be the number of times that $s$ is included in the sample ${\mathcal B}$, {where $|{\mathcal B}| = B $}.
 Then, from 
Eq.\eqref{eq:gkss_resample_quadratic_form}, 
$$  \widehat{\operatorname{gKSS}}^2(q;x)
  =  \frac{1}{B^2} \sum_{s, s' \in [N] } k_s k_{s'}  h_x(s, s').
$$
In this expression, the randomness  only lies in the counts  {$k_s$, where $s \in [N]$}. These counts are exchangeable {and 
${\mathbf{k}} = (k_s, s  \in [N])$ follows the multinomial
   $(B; N^{-1}, \ldots, N^{-1})$ distribution.} Hence the statistic 
$  \frac{1}{N^2} \sum_{s,t  \in [N] } k_s k_t  h_x(s, t)$ is a sum of weakly globally dependent random variables, {although due to the network $x$ being fixed, this is not a classical $V$-statistic}. {Instead, Stein's method will be used to prove the following result in the Supplementary Material.} 

\begin{Proposition}\label{bootstrapnormal} 
Let    
$$Y  = \frac{1}{B^2}  \sum_{s, t \in [N] } ( k_s k_t - \E (k_s k_t) ) h_x (s, t).$$
Assume that  $h_x$ is  bounded such that $Var(Y)$ is non-zero. 
Then if $Z$ is mean zero normal with variance $Var(Y)$, there is an {explicitly computable}  constant $C>0$ such that for all three times  continuously differentiable functions $g$ with bounded derivatives up to order 3, 
$$
| \E [ g(Y) ] - \E [g(Z)] \le \frac{C}{B}. 
$$ 
\end{Proposition}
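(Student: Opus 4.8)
The plan is to expose $Y$ as a centred V-statistic of degree two built from $B$ i.i.d.\ draws, and then run Stein's method for normal approximation via an exchangeable pair obtained by re-drawing a single resampled index. First I would use the i.i.d.\ representation $k_s = \sum_{b=1}^B \mathbbm{1}[s_b = s]$ with $s_1,\dots,s_B$ i.i.d.\ uniform on $[N]$, which is exactly the multinomial $(B; N^{-1},\dots,N^{-1})$ law. Substituting, and using that $h_x$ is symmetric (since $h_x(s,s') = \langle \A_q^{(s)} K(x,\cdot), \A_q^{(s')} K(x,\cdot)\rangle_{\H}$), one obtains
\[
Y \;=\; \frac{1}{B^2}\sum_{b,b'\in[B]}\Bigl(h_x(s_b,s_{b'}) - \E\,h_x(s_b,s_{b'})\Bigr),
\]
i.e.\ $Y$ is the centred degree-two V-statistic associated with the fixed bounded symmetric array $h_x$ and the sample $s_1,\dots,s_B$. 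Because the network $x$ is fixed and $K$ is bounded, $M := \sup_{s,s'}|h_x(s,s')| < \infty$, and every constant below will be an explicit function of $M$ alone. Splitting off the diagonal ($b=b'$) contribution and taking the Hoeffding decomposition of the off-diagonal U-statistic writes $Y$ as a linear part $\tfrac{2}{B}\sum_b g_1(s_b)$ with $g_1$ centred and bounded (of $L^2$-size $B^{-1/2}$, carrying the Gaussian limit), a degenerate quadratic remainder of $L^2$-size $B^{-1}$, and a diagonal remainder of $L^2$-size $B^{-3/2}$.

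Next I would set up the exchangeable pair: let $I$ be uniform on $[B]$, let $s_I'$ be an independent uniform draw on $[N]$, and let $Y'$ be $Y$ recomputed with $s_I$ replaced by $s_I'$. Only the terms touching coordinate $I$ change, so a short computation yields $\E[Y-Y'\mid \mathbf{s}] = \lambda Y + R$ with $\lambda = 2/B$ and an explicit remainder $R$ collecting the diagonal contribution and the projection terms; with the size estimates above, $\E|R|$, $\E|Y-Y'|^3$ and $\Var(\E[(Y-Y')^2\mid\mathbf{s}])$ are each of sufficiently small order that, after dividing by $\lambda$, all contributions in the standard smooth-test-function bound from exchangeable-pair Stein's method (for $g$ with bounded derivatives up to order three) are $O(1/B)$; tracking constants through the Hoeffding decomposition produces the explicit $C$. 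Equivalently, and more directly given the class of test functions: since $\E Y = \E Z = 0$ and $\Var(Y) = \Var(Z)$ \emph{by construction}, a second-order Taylor expansion gives $|\E g(Y) - \E g(Z)| \le \tfrac16\|g'''\|_\infty(\E|Y|^3 + \E|Z|^3)$, and the Hoeffding decomposition yields $\E|Y|^3 = O(B^{-3/2})$ while $\E|Z|^3 = O((\Var Y)^{3/2}) = O(B^{-3/2})$, which already gives the claimed rate (indeed a bit better); here the hypothesis $\Var(Y) > 0$ is used only to keep the comparison non-degenerate.

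The main obstacle is the bookkeeping that turns the heuristic above into explicit bounds with constants depending only on $M = \sup|h_x|$: one must verify that the off-diagonal part is genuinely governed by its Hoeffding linear projection with an $O(1/B)$ remainder \emph{uniformly}, and that the diagonal terms — precisely the feature that makes this \emph{not} a classical V-statistic limit of a fixed kernel — contribute only at the negligible order. In the exchangeable-pair route the delicate point is controlling the linearity remainder $R$ and the conditional second moment $\E[(Y-Y')^2\mid\mathbf{s}]$, while in the moment-matching route it is the third-moment estimate $\E|Y|^3 = O(B^{-3/2})$; both reduce to Rosenthal/Marcinkiewicz–Zygmund-type inequalities applied to the pieces of the Hoeffding decomposition. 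Everything else is routine once $h_x$ is treated as a fixed bounded symmetric array.
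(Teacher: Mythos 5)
Your second, ``more direct'' route is correct and is genuinely different from the paper's proof. The paper embeds $Y=W_1$ together with two auxiliary statistics $W_2,W_3$ into a three-dimensional exchangeable pair (built by moving a single ball between the multinomial bins), verifies the linearity condition $\E[W'-W\mid W]=-\Lambda W$ \emph{exactly} with remainder $R=0$, and then applies the multivariate smooth-test-function bound of Reinert and R\"ollin, estimating the conditional-variance term $I$ and the third-increment term $II$ by hand to get $O(1/B)$. Your observation that the Proposition compares the \emph{unstandardised} $Y$ with a normal having the same first two moments, so that a third-order Taylor expansion about $0$ yields $|\E g(Y)-\E g(Z)|\le \tfrac16\|g'''\|_\infty(\E|Y|^3+\E|Z|^3)$, is valid and self-contained: writing $k_s=\#\{b:s_b=s\}$ turns $Y$ into a centred bounded V-statistic of the i.i.d.\ draws, counting overlapping index pairs gives $\Var(Y)=O(M^2/B)$, and $\E|Y|^3\le(\E Y^4)^{3/4}=O(B^{-3/2})$ together with $\E|Z|^3=O((\Var Y)^{3/2})=O(B^{-3/2})$ gives the claimed bound with the better rate $B^{-3/2}$. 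What the paper's heavier machinery buys is a bound organised through $|g|_1,|g|_2,|g|_3$ separately, the form one would need to say something at the standardised scale; your Taylor bound degenerates under the rescaling $g(y)=\tilde g(y/\sqrt{\Var Y})$ because $\|g'''\|_\infty$ then grows like $(\Var Y)^{-3/2}$. For the Proposition as literally stated, your argument suffices and is considerably more elementary.

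Your first (exchangeable-pair) sketch, by contrast, has a quantitative flaw as written. Under the ``redraw one index $s_I$'' pair, the Hoeffding pieces contract at different rates: the linear part $L$ and the diagonal part $D$ satisfy $\E[L-L'\mid \mathbf{s}]=\tfrac1B L$ and $\E[D-D'\mid \mathbf{s}]=\tfrac1B D$, while only the degenerate quadratic part contracts at rate $\tfrac2B$. With your choice $\lambda=2/B$ the remainder is $R=-\tfrac1B(L+D)$, of $L^2$-size $B^{-3/2}$, so the Stein-bound contribution $\lambda^{-1}\E|R|$ is $O(B^{-1/2})$, not $O(1/B)$. Taking $\lambda=1/B$ instead leaves $R=\tfrac1B Q=O(B^{-2})$ and recovers the rate; the paper sidesteps the issue entirely by enlarging to a vector so that the linearity condition holds with $R=0$.
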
 
{When {the sampling fraction} $F= \frac{B}{N}$ is kept approximately constant as $N\rightarrow \infty$,} 
noting that
\begin{eqnarray*}
\widehat{\operatorname{gKSS}}^2(q;x)  
   &=&   Y + \operatorname{gKSS}^2 + \frac{N-1}{B N^2} \sum_{s \in [N]} h (s, s) \\
   && - \frac{1}{ B^2 N^2}\sum_{s, t \in [N] , s \ne t}  h (s, t) 
   \end{eqnarray*}  
   the  normal approximation {for $\widehat{\operatorname{gKSS}}^2(q;x)$ with approximate mean  $\operatorname{gKSS}^2(q;x)$}    follows for $N \rightarrow \infty$.
}

\section{EXPERIMENTS}\label{sec:exp}
{To assess the performance of the test},
we replicate the {synthetic} {benchmark-type} settings from \citep{lusher2013exponential,rolls2015simulation,yang2018goodness}. 
We then apply our tests to {three} real data {networks: Lazega's lawyer network \citep{lazega2001collegial} and a friendship network \citep{steglich2006applying} which are both} studied in \citep{yin2019selection},  as well as {a co-sponsorship network from}  \citep{fowler2006connecting, fowler2006legislative}.

\subsection{Synthetic Experiments}

\begin{figure*}[ht!]
	\begin{center}
		\subfigure[$n=20$, $\alpha=0.05$,  $\beta_2$ perturbed]
		{\includegraphics[width=0.48\textwidth, height=0.32\textwidth]{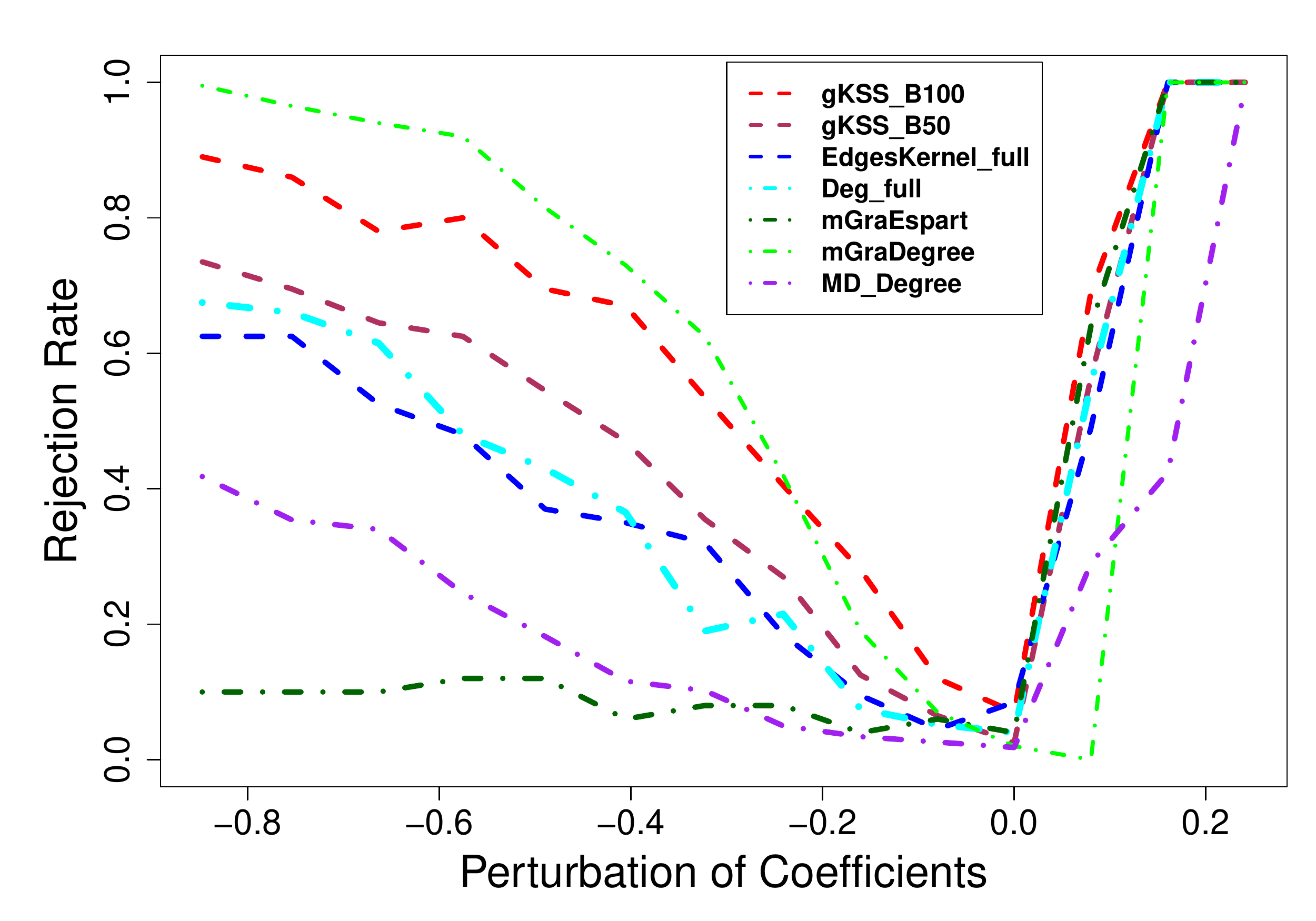}\label{fig:perturb}}
		\subfigure[$H_1: \beta = (-2,-0.03, 0.01)$, $\alpha=0.05$]
		{\includegraphics[width=0.48\textwidth, height=0.32\textwidth]{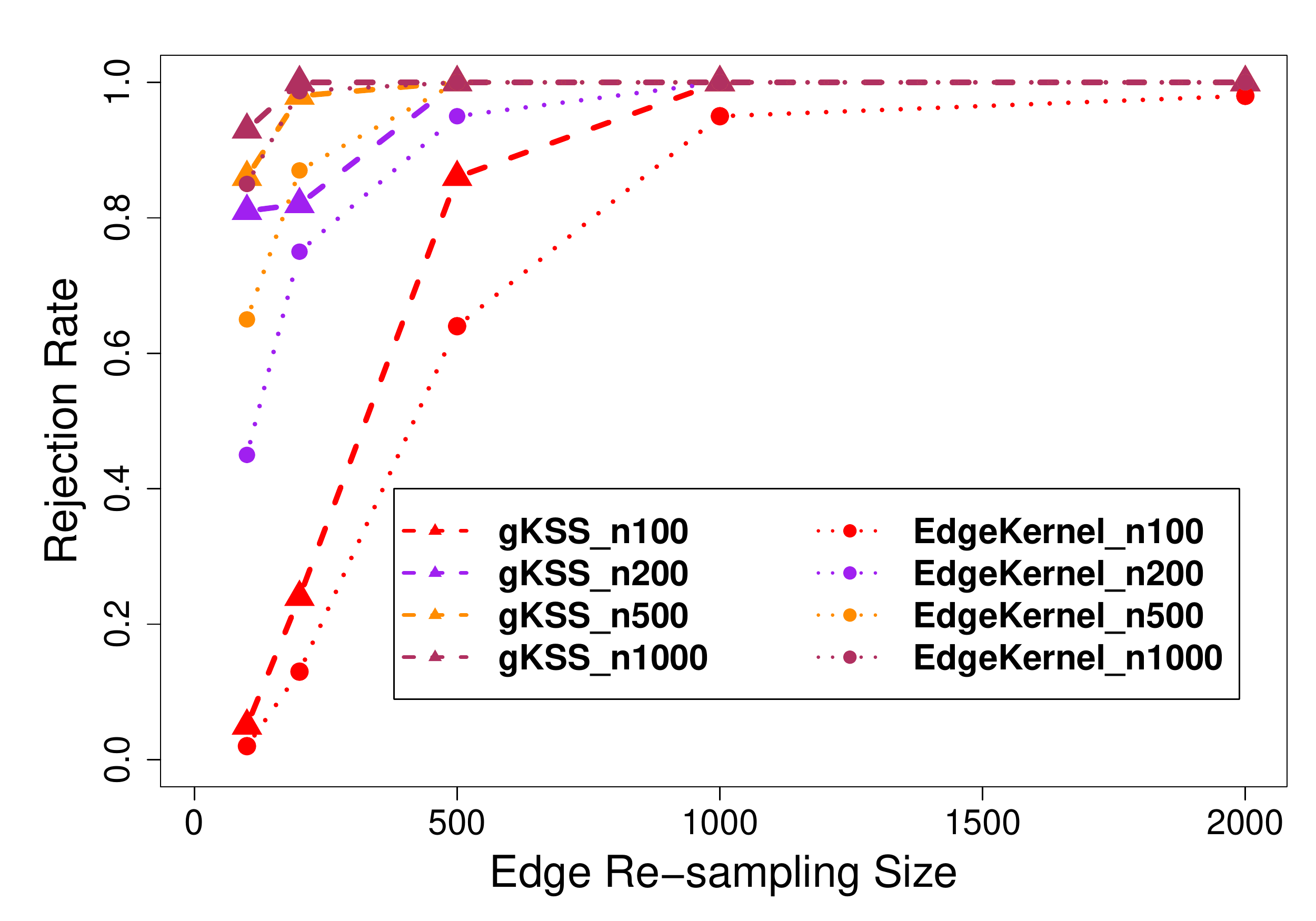}\label{fig:large_network}}
		\caption{Simulation Results for E2ST Model}	\label{fig:synthetic-problems}	
	\end{center}
\end{figure*}

{\bf Model }
In the synthetic example, we assess the test performance on relatively simple but useful ERGMs, with {three graphs $H_l$ {in the statistic $t$}, namely} edge, 2-star, and triangle; {we abbreviate this model as E2ST}.
Then the unnormalised density has the form
\begin{equation}\label{eq:e2st}
q(x) \propto \exp{\Big( \beta_1 E_d(x) + \beta_2 S_2(x)+ \beta_3 T_r(x) \Big)},
\end{equation}
where $E_d(x)$ denotes the number of edges in $x$; $S_2(x)$ denotes the number of 2-stars in $x$ and $T_r(x)$ denotes the number of triangles in $x$.
We choose the null parameter as $(\beta_1,\beta_2,\beta_3) = (-2, 0.0, 0.01)$, which satisfies Assumption \ref{assum:er_approx} {and gives} 
$a^*= 0.1176$. {For the alternative distributions, 
following similar settings in \citep{yang2018goodness}, we fix the coefficient $\beta_1= -2$ and $\beta_3 =0.01$ of the E2ST model in Eq.\eqref{eq:e2st}, and test the null model of $H_0: \beta_2 = 0$ against the alternative $H_1: \beta_2 \neq 0$  with a perturbed $\beta_2$ so that the alternative model satisfies Assumption 1 also.
}


{\bf The Proposed Methods} 
We apply the proposed goodness-of-fit test procedures and compare with existing approaches. {We use the following abbreviations:}  
\textbf{gKSS} stands for the proposed test in Algorithm \ref{alg:kernel_stein_monte_carlo}; 
\textbf{gKSS\_B100} uses Eq.\eqref{eq:gkss_resample_quadratic_form} as the test statistic where $B=100$, and  \textbf{gKSS\_n20} denotes testing the problem with  $n=20$ {vertices}. Results shown in Fig.~\ref{fig:synthetic-problems} are based on  Weisfeiler-Lehman graph kernels \citep{shervashidze2011weisfeiler}; {results using other kernels are shown
in the Supplementary Material Section \ref{supp:exp}}. 
\textbf{EdgeKernel} denotes the gKSS with a kernel between binary edges which corresponds to  a test based on edge counts. Re-sampling applies, e.g. \textbf{EdgeKernel\_B100} indicates that $100$ edges are re-sampled from the network.

{\bf The Competing Approaches} We list the goodness-of-fit testing methods {which serve as comparisons using the following} abbreviations. 
\textbf{Degree\_full} stands for degree-based tests \citep{ouadah2020degree}, 
where the variance of degree counts on vertices are used as test statistics. 
The suffix ``full'' indicates that all vertices are used. 
The graphical tests for goodness-of-fit {from}  \cite{hunter2008goodness} simulate the null distribution of a chosen network statistic from the null model as a visual guideline 
for goodness-of-fit. We {quantify this}  idea by using total variation (TV) distance between distributions of network statistics of choice as test statistics; \textbf{mGra} stands for the modified graphical test, where the 
{TV} 
distance is  used to compare the distribution of the summary statistics of choice. Full details are provided in the Supplementary Material \ref{supp:mgra_tv}.
We append  \textbf{mGra} {by the  summary statistics used}, so that, {for example},
\textbf{mGraDegree} uses the TV distance between degree distributions as test statistics. \textbf{Espart} (or espartners) stands for edgewise shared partner \citep{hunter2008goodness}; 
\textbf{MD\_Degree} stands for the test based on Mahalanobis distance between chosen summary statistics \citep{lospinoso2019goodness}. The suffix after hyphen indicates that the vertex degree is used as network statistics.

{{\bf Test Results}}
The {main} results are shown  in Fig.~\ref{fig:perturb}. We see that $\operatorname{gKSS}$ has higher power than the competitors, while, as expected, larger re-sampling size performs better.
{A} denser networks can be easier to distinguish {as higher subgraph counts are available compared to}
sparser networks. {In our experimental set-up,}
the network size $n=20$ is relatively small and the null model, with $\beta_1 = -2$, is {fairly} sparse. 
{We} observe that \textbf{mGraDegree} has slightly higher power than $\operatorname{gKSS}$ when  $\beta_2 < -0.3$ {so that} the graph is sparser; 
it performs poorly when the alternative model is closer to the null, i.e. $|\beta_2|$ small \footnote{In particular, it did not identify the slightly denser alternatives, which should be relatively easier problems.}. This 
{may  relate to using} the TV distance  for {comparing the}  degree distribution; {the phenomenon does not occur for \textbf{MD\_Degree}}. Overall, \textbf{gKSS} is more reliable and has {typically} higher power {compared to}
{these competing}  methods.

{\bf Increasing Edge Re-sampling Size B} 
Fig.\ref{fig:large_network} shows the test power of large networks up to $n=1000$ {vertices}. The results show that the tests achieve maximal power with a relatively small number of re-sampling edges {indicators}.
With the choice of re-sampling size $B$ and good test power with a relatively small number of re-sampled edge indicators, $\operatorname{gKSS}$ is applicable to 
{networks with a large number of vertices, beyond the reach of the} 
graphical-based tests \citep{hunter2008goodness}.
In particular, the proposed tests can be useful in validating model assumptions in practical problems where the networks
{have a large number of vertices}.

{\bf Computational Time} {The computational times for each test are shown in Table~\ref{tab:runtime}.} {The {\textbf{gKSS}} tests are faster than the {\textbf{mGra}} tests and of similar speed as the less accurate full degree method.}
{The slow} {\textbf{mGra}}
tests are based on the {computational demanding as well as hard-to-scale} estimation associate with the graphical-based method  in \cite{hunter2008goodness}.
{Its}  main computational cost {stems} from 
simulating the null graphs from $\mathtt{ergm}$ to compute the TV distances.
{Although} the  \textbf{Degree\_full} {test} is supposed to be fast with computational complexity $O(n)$ , due to the estimation of the mean and variance of the degree statistics via simulating the null from $\mathtt{ergm}$, its runtime is comparable with \textbf{gKSS\_B50} {with}  complexity  $O(B^2)$ for $B=100$.

\begin{table}[ht]
\centering
\begin{tabular}{crrr}
  \hline
n & gKSS\_B50 & gKSS\_B100 & gKSS\_B200 \\ 
  \hline
20 & 14.53 & 33.57 & 67.63 \\ 
  30 & 15.03 & 41.08 & 70.14 \\ 
  50 & 21.54 & 50.10 & 91.18 \\ 
   \hline
  \hline
n & Degree\_full & mGraDegree & mGraEspart \\ 
  \hline
20 & 38.08 & 4596.67 & 4779.04 \\ 
  30 & 39.08 & 4840.66 & 4871.72 \\ 
  50 & 44.09 & 5127.74 & 5210.40 \\ 
   \hline
\end{tabular}
    \caption{The computational time for each test, in seconds, for 500 trials.}
    \label{tab:runtime}
\end{table}


\subsection{Real Data Applications}
Next we apply our test to two {benchmark {social} network data sets} {which are analysed in \cite{yin2019selection};} 
Lazega’s lawyer network  \citep{lazega2001collegial} consists of a 
network between 36 laywers;  the Teenager friendship {network} \citep{steglich2006applying} {is a friendship data set of}  50 secondary school students in Glasglow.
Moreover, we  apply our proposed test to large  network, a co-sponsorship network for pieces of  legislation in the U.S. Senate  from  \cite{fowler2006connecting,fowler2006legislative}. The network {data used here are from} 
\cite{schmid2017exponential} {and}  consists of $2825$ vertices and 
$28813$ edges. {For all three networks we fit an ER model with the maximum likelihood estimate as  edge probability,  an E2ST model, and  an ER($a^*$) model using as edge probability $a^*$ calculated from the E2ST fit, or, for the co-sponsorship network, calculated from fitting an additional model detailed below. Table~\ref{tab:real_data} summarises the results.} 

For the Lawyer network, \cite{lazega2001collegial} suggests
{an ER model}.
Our test does not reject this null hypothesis when testing against the best fitted ER graph, {with edge probability $p= 0.055$}, 
which supports the assumed model. The fitted E2ST model with $\beta=(-2.8547,  -0.0003,   0.6882)$ is  {rejected} at $\alpha=0.05$. {This E2ST is close to an ER graph with  $\beta_{E_d} = -2.774$ and the corresponding ER($a^*$) model is not rejected at $\alpha=0.05$.}  

For the Teenager network,  
the fitted ER model with $p=0.046$
is rejected at $\alpha=0.05$; 
{for the} fitted E2ST model in Eq.\eqref{eq:e2st} with $\beta=(-2.3029,   -0.3445,    2.8240  )$
we do not have strong evidence to reject the null at $\alpha=0.05$. {The corresponding ER($a^*$) model is also not rejected at $\alpha=0.05$.}  {In particular the maximum likelihood estimator does not give the best fitting ER model.} 


The co-sponsorship network is well fitted by the ER graph with edge probability $p=0.0072$.
{In contrast,} the fitted E2ST with $\beta=(-6.4126,  -0.0240,   2.4684)$,  is rejected at $\alpha=0.05$. Additionally 
we  fit the ERGM proposed in \cite{schmid2017exponential}, which includes party homophily \citep{zhang2008community} and {the alternating $k$-star statistic} \citep{snijders2006new}:
$$
{q^*}(x) \propto \exp{\{\beta_1 E_d(x) + \beta_2\Gamma(x;P)} + \beta_3 {\rm S_{alt}}(x;\lambda)\}
$$
where $P$ denotes the party assignment information between the {pieces of} legislations, and $\Gamma(x;P)=\sum_{ij}x_{ij}P_{ij}$; {with} the $k$-star count $S_k(x)$, the alternating $k$-star statistic is  ${\rm S_{alt}}(x;\lambda)= \sum_{k=2}^{n-1}(-\frac{1}{\lambda})^{k-2} S_k(x) 
$.
{We use the model $q^*$} with parameters fitted in \cite{schmid2017exponential},  $\beta_1=-5.884$,  $\beta_2=1.440$, $\beta_3=0.124$, and the parameter in alternating $k$-stars $\lambda = 0.4975$. {This model (with p-value=$0.022$), as well as its corresponding ER($a^*$) model}  are rejected at $\alpha = 0.05$. 


\begin{table}[]
    \centering
\begin{tabular}{c|c|c|c|c}
\hline
{} & $n$&  ER &   E2ST  &  ER($a^*$) \\
\hline
Lawyer & 36&      \color{blue}{0.280} &   \color{red}{0.012}   & \color{blue}{0.152}     \\
Teenager & 50  & \color{red}{0.016}      & \color{blue}{0.060}  &   \color{blue}{0.336} \\
Co-sponsor & 2825& \color{blue}{0.612} & \color{red}{0.002}  &   \color{red}{0.036} \\
\hline
\end{tabular}
\vspace{0.3cm}
    \caption{Rejection rates for real networks. The results marked ${\color{blue}{blue}}$ indicate {not rejecting} and 
    ${\color{red}{red}}$ 
    the null hypothesis at $\alpha=0.05$, using {$\widehat{\operatorname{gKSS}^2}$ with} $B=200$.}
    \label{tab:real_data}
\end{table}

\section{CONCLUSIONS AND DIRECTIONS FOR FURTHER WORK}
{In this paper we {provide} a novel goodness-of-fit test for exponential random graph models using Stein's method. A key feature is that the test relies on the observation of only one network. Probabilistic properties of the test statistic are analysed through comparison with Bernoulli random graphs. 
}

{Directions for future work include a thorough analysis of the interplay of the graph kernels used in the RKHS and the GKSD. Adaptive methods for tuning graph kernel hyper-parameters would be interesting; see for example \cite{gretton2012optimal} or \cite{jitkrittum2016adaptive}.}

{Further, a large contribution to the computational cost of GKSD stems from sampling from the null model; an issue which affects all main methods for assessing goodness-of-fit for exponential random graph models. Developing a goodness-of-fit testing procedure based on a single network observation which does not require simulations from the null model is an exciting future challenge. 
}

{Finally, the approach is of independent interest and holds promise for adaptation to other random graph models.}

\paragraph{Acknowledgements}
The authors would like to thank Arthur Gretton and Andrew Barbour for helpful discussions. {Moreover, they thank the anonymous reviewers for many good comments which have improved the paper.}  G.R. acknowledges the support from EP/R018472/1 as well as from the COSTNET COST Action CA 15109. 
W.X. acknowledges the support from the Gatsby Charitable Foundation.

\bibliographystyle{abbrvnat}
\bibliography{stein}

\newpage

\appendix
\onecolumn

\begin{center}
    \Large A Stein Goodness-of-test for Exponential Random Graph Models\\
    Supplementary Material
\end{center}

\section{Proofs and Additional Lemmas}\label{supp:proofs}

{\bf{Proof of Theorem \ref{thm:asy_null_first}}}

\medskip
{For {convenience} we re-state the theorem here.}

\medskip 
{\bf{Theorem \ref{thm:asy_null_first}.}}
{\it 
Let $q(x)=\operatorname{ERGM}(\beta, t)$ {satisfy} Assumption \ref{assum:er_approx} and let ${\tilde q}$ denote the distribution of {ER$(a^*)$.}
For $f\in \H$ equipped with kernel $K$, let
$
f_x^*(\cdot) = \frac{ (\A_q - \A_{\tilde q} ) K(x,\cdot)}{\left\|(\A_q - \A_{\tilde q} ) K(x,\cdot) \right\|_{\H}}.
$
Then there is an explicit constant $C=C(\beta, t, {K})$  such that for all $\epsilon > 0,$
\begin{eqnarray*}
{
\P ( |  \gKSS (q,X)  - \gKSS ( {\tilde q}, Y) | \, >  \,  \epsilon)}\le \Big\{  || \Delta (\gKSS(q, \cdot))^2 || ( 1 +  || \Delta \gKSS(q, \cdot) || ) +  4 \sup_x 
(|| \Delta f_x^*||^2)  \Big\} 
{n \choose 2}  \frac{C}{{\epsilon^2 \sqrt{n}}} . 
\end{eqnarray*}
}

Under the null hypothesis, $X\sim q$ which is an  ERGM satisfying Assumption \ref{assum:er_approx}. 
{Let $Y\sim \tilde{q}$, where $\tilde{q}$ is the Bernoulli random graph  with edge probability $a^*$ and $a^*$ is {a}  solution to the equation in Assumption \ref{assum:er_approx}.
{We use the triangle inequality, 
\begin{equation}\label{th1proof}
 | \gKSS (q,x)  -  \gKSS (\tilde{q},y)| \le  
  | \gKSS (q,x)  -  \gKSS (\tilde{q},x)|  + |   \gKSS (\tilde{q},y)|,x)  -  \gKSS (\tilde{q},y)|.
\end{equation} 
This gives rise to two approximation terms.}
For the first summand in \eqref{th1proof}, we start with noting that
$$
 \gKSS (q,x) = \sup_{f \in \H, || f|| \le 1} | \A_q f (x) |
 =  \sup_{f \in \H, || f|| \le 1} | ( \A_q -  \A_{\tilde q}  +  \A_{\tilde q}) f (x) |
 \le \sup_{f \in \H, || f|| \le 1} | ( \A_q -  \A_{\tilde q} ) f(x) | + \gKSS ({\tilde q}, x) 
$$
{and this inequality also holds with the roles of $q$ and $\tilde{q}$ reversed,} 
so that 
$$ |  \gKSS (q,x) - \gKSS ( {\tilde q}, x) | \le \sup_{f \in \H, || f|| \le 1} \left| ( \A_q -  \A_{\tilde q} ) f(x) \right| {\quad  =   \sup_{f \in \H, || f|| \le 1} \left| \langle f (\cdot), ( \A_q -  \A_{\tilde q} ) k(x, \cdot) \rangle_{\H}   \right| }  $$
{where we used that due to the RKHS property, $f(x) = \langle f (\cdot), k(x, \cdot)\rangle_{\H}$. Thus} we have an explicit form for the optimal $f_x^* $ in this expression, namely
$  f_x^* (\cdot) = {   ( \A_q -  \A_{\tilde q} )k(x, \cdot)} / {\| ( \A_q -  \A_{\tilde q} )k(x, \cdot) \|_{\H} },$
 }
and   
  $$ |  \gKSS (q,x) - \gKSS ( {\tilde q}, x) | \le   | ( \A_q -  \A_{\tilde q} ) f_x^*(x)  |.$$ 
Following the steps for the proof of Theorem 1.7 in \cite{reinert2019approximating} but working directly with a function $f$ without using that it is a solution of a Stein equation, it is straightforward to show that for all $f \in \H,$ it holds that for $Y \sim { \tilde q},  $
$$ | \E (\A_q f(Y) - \A_{\tilde{q}} f(Y) ) | \le || \Delta f|| {n \choose 2} \frac{C(\beta, t)}{\sqrt{n}}$$ 
for an explicit constant $C$ which depends only on the vectors $\beta$ and $t$. 
Moreover inspecting the proof of Lemma 2.4 in \cite{reinert2019approximating} the bound is indeed a stronger bound,
$$  \frac{1}{N} \sum_{s \in {N}} \E | (\A_q^{(s)}  f(Y) - \A_{\tilde{q} } ^{(s)} f(Y) ) | \le || \Delta f|| {n \choose 2} \frac{C(\beta, t)}{\sqrt{n}}.$$
In particular with the crude bound $| ( \A_q^{(s)}   - \A_{\tilde{q}}^{(s)} ) f| \le 2  || \Delta f  || $ it follows that
$$
\E  \left\{ \left( { \frac{1}{N} \sum_{s \in {N}} } (\A_q^{(s)}  f(Y) - \A_{\tilde{q}}^{(s)} f(Y) ) \right)^2 \right\}  \le 2  || \Delta f  ||^2 
 {n \choose 2} \frac{C(\beta, t)}{\sqrt{n}}.
$$
Thus, using the Chebychev inequality,  for all $\epsilon > 0$, 
$$ \P ( |   ( \A_q -  \A_{\tilde q} ) f_Y^* (Y)  | > \epsilon)
\le \frac{1}{\epsilon^2} \Var ( ( \A_q -  \A_{\tilde q} ) f_Y^* (Y) )
\le  
4 \sup_x (|| \Delta f_x^*||^2)  
{n \choose 2} \frac{ { {C}}(\beta, t)}{\epsilon^2 \sqrt{n}}.
$$ 
Hence 
$$ \P ( |  \gKSS (q,Y) - \gKSS ( {\tilde q}, Y) | > \epsilon)
\le  {4 \sup_x (|| \Delta f_x^*||^2)  }  {n \choose 2} \frac{ {{C}}(\beta, t)}{ \epsilon^2 \sqrt{n}}.
$$ 
{For the second summand in Eq.\eqref{th1proof}, to} bound $ |\gKSS (q,X) -  \gKSS (q,Y) | $ we consider the test function $h(x) =  \gKSS(q, x)$ and apply Theorem 1.7 from \cite{reinert2019approximating} to give that 
$$ | \E (  \gKSS (q,X) -   \gKSS (q,Y) ) | \le || \Delta \gKSS(q, \cdot) ||   {n \choose 2}  \frac {{{\tilde{C}}}}{\sqrt {n}}. $$
{Here ${\tilde C}$ is a new constant which depends only on $\beta$ and $t$.}
Similarly we can approximate the square of the expectation using that $(a-b)^2 = a^2 - b^2 + 2 b(b-a) $ {and write 
$$
\E \{ (  \gKSS (q,X) -   \gKSS (q,Y) )^2 \}
= \E \{ (  \gKSS (q,X)^2 \}  -  \E \{ \gKSS (q,Y)^2 \} + 2 \E \{  \gKSS (q,X) ( \gKSS (q,X) - \gKSS (q,Y) \}.
$$ The first summand can be bounded with  Theorem 1.7 from \cite{reinert2019approximating} using the test function 
$h(x) = \gKSS(q, x)^2 $. For the second summand, we the Cauchy-Schwarz inequality gives 
$$
| \E  \{  \gKSS (q,X) ( \gKSS (q,X) - \gKSS (q,Y) \} | \le [ \E (\gKSS (q,X)^2 ) ]^\frac12 
[ \E \{ (  \gKSS (q,X)   -   \gKSS (q,Y) )^2 \} ]^\frac12 
$$ 
As $| \gKSS(q, x)| \le 1$ we obtain
$$
\E \{ (  \gKSS (q,X) -   \gKSS (q,Y) )^2 \} \le \E \{  \gKSS (q,X)^2 \}  -  \E \{ \gKSS (q,Y)^2 \}
+ 2 [ \E \{ (  \gKSS (q,X)   -   \gKSS (q,Y) )^2 \} ]^\frac12 .
$$
Solving this quadratic inequality gives 
$$
\E \{ (  \gKSS (q,X) -   \gKSS (q,Y) )^2 \} \le \left( 1 - \sqrt{1 - (\E \{   \gKSS (q,X)^2 \}  -  \E \{ \gKSS (q,Y)^2 \} )}  \right) ^2 
$$
and $ | \E \{   \gKSS (q,X)^2 \}  -  \E \{ \gKSS (q,Y)^2  | \le 1 $ we obtain that 
$$
\E \{ (  \gKSS (q,X) -   \gKSS (q,Y) )^2 \} \le |\E \{   \gKSS (q,X)^2 \}  -  \E \{ \gKSS (q,Y)^2 \} | .
$$
With Theorem 1.7 from \cite{reinert2019approximating} for the  test function 
$h(x) = \gKSS(q, x)^2 $ we obtain 
} 
$$ \E \{ (  \gKSS (q,X) -   \gKSS (q,Y) ) \}^2 \le ( || \Delta (\gKSS(q, \cdot))^2 ||  {n \choose 2}  \frac{{\hat C}}{ {\sqrt{n}} } ,$$
where ${\hat C} $ is another constant which depends only on $\beta$ and $t$ {but not on $n$}. With the Chebychev inequality and the triangle inequality we conclude that there is an {explicitly computable} constant $C$ such that for all $x$
$$
\P ( | \gKSS (q,X)  - \gKSS ( {\tilde q}, Y) | > \epsilon)
\le ( \sup_x (|| \Delta f_x^*||^2)  +  || \Delta (\gKSS(q, \cdot))^2 || ( 1 +  || \Delta \gKSS(q, \cdot) || ) 
{n \choose 2}  \frac{C}{ {\epsilon^2}{\sqrt{n}}} . 
$$ 
The assertion follows.  $\hfill \Box$
%


\medskip 
For the approximate distribution of $\gKSS( {\tilde q}, Y)$ it is more convenient to consider the square as given in Eq.\eqref{eq:gkss_quadratic_form}; this is addressed by  Theorem \ref{normalapprox}.

\medskip
{\bf Proof of Theorem \ref{normalapprox}}

{For {convenience} we re-state the assumptions and the theorem here.}

To approximate the distribution of $\gKSS^2$ under the null hypothesis we make the following assumptions (Assumption \ref{assum:er_approx_kernel} in the main text) on the kernel $K$ for the RKHS $\H$, namely that for $x, y \in \{ 0, 1 \}^N,$ \vspace{-2mm} 
\begin{enumerate}[i)]
    \item $\H$ is a tensor product RKHS, $\H = \otimes_{s \in [n]}\H_s $; \vspace{-2mm} 
    \item $k$ is a product kernel, $k(x, y) = \otimes_{s \in [N]} l_s(x_s, y_s)$; \vspace{-2mm} 
    \item  $ \langle l_s (x_s, \cdot), l_s (x_s, \cdot) \rangle_{\H_s}  =1$; \vspace{-2mm} 
    \item  $l_s(1, \cdot) - l_s(0, \cdot) \ne 0$ for all $s \in [N]$. \vspace{-2mm} 
    \end{enumerate}

{\bf{Theorem \ref{normalapprox}.}} 
{\it 
 Assume that the conditions i) - iv) in Assumption \ref{assum:er_approx_kernel} hold. 
 Let $\mu = \E [  \operatorname{gKSS}^2 ( {\tilde q}, Y)] $ and $\sigma^2 = \Var [ \operatorname{gKSS}^2 ( {\tilde q}, Y)]. $ Set 
 $W = \frac{1}{\sigma} (  \operatorname{gKSS}^2 ( {\tilde q}, Y)]  - \mu)$ and let  $Z$ denote a standard normal variable, Then there is an explicit constant $C = C(a^*, l_s, s \in [N]) $ such that 
\begin{equation*} 
    || {\mathcal L}(W) - {\mathcal L}(Z)||_1  \le \frac{C}{\sqrt{N}}. 
    \end{equation*} }

\medskip
For the Bernoulli random graph distribution ${\tilde q},$ and $s \in [N]$,
$$
\A_{\tilde q}^{(s)} f(x)  =
a^* f(x^{(s,1)} - f(x) ) + (1-a^*)  f(x^{(s,0)} - f(x) ).
$$
Thus, 
\begin{eqnarray*}
\gKSS^2({\tilde q}, x) &=& 
\frac{1}{N^2} \sum_{s,s' \in [N]} \left\langle  \A_{\tilde q}^{(s)}  K(x, \cdot) ,   \A_{\tilde q}^{(s')} K(x, \cdot) \right\rangle \\
&=& \frac{1}{N^2} \sum_{s,s' \in [N]} \Big\langle a^* \left(  K(x^{(s,1)}, \cdot) -  K(x, \cdot) \right) + (1-a^*) \left(K(x^{(s,0)}, \cdot) - K(x, \cdot) \right), \\
&& \quad \quad \quad \quad \quad  \quad \quad  a^* \left(  K(x^{(s',1)}, \cdot) - K(x, \cdot) \right) + (1-a^*) \left( K(x^{(s',0)}, \cdot) - K(x, \cdot) \right)\Big\rangle
.
\end{eqnarray*}
Under Assumptions \eqref{ass21}, \eqref{ass22} and \eqref{ass23} we can write 
\begin{eqnarray*}
 K(x^{(s,1)}, \cdot) -  K(x, \cdot) &=& 
 (l_s(1 ,  \cdot) - l_s( x_s,   \cdot) ) \prod_{t \ne s} l_t (x_t,  ,  \cdot)\\
 &=& (1 - x_s)  (l_s(1, \cdot ) - l_s( 0 ,  \cdot) ) l_{s'} (x_{s'}, \cdot ) \prod_{t \ne s, s'} l_t (x_t,    \cdot).
\end{eqnarray*}
Similarly, 
\begin{eqnarray*}
 K(x^{(s,0)}, \cdot) -  K(x, \cdot) &=& 
 - x_s  (l_s(1, \cdot ) - l_s( 0 ,  \cdot) ) l_{s'} (x_{s'}, \cdot) \prod_{t \ne s, s'} l_t (x_t,    \cdot).
\end{eqnarray*}
Abbreviating $g(x^{-s,s'}, \cdot) := \prod_{t \ne s, s'} l_t (x_t,    \cdot)$ we obtain that
\begin{eqnarray*}
\gKSS^2({\tilde q}, x) 
&=& \frac{1}{N^2} \sum_{s,s' \in [N]}  (a^* (1 - x_s) - (1-a^*) x_s)  (a^* (1 - x_s') - (1-a^*) x_s')\\
&& \langle (l_s(1, \cdot ) - l_s( 0 ,  \cdot)  l_{s'} (x_{s'}, \cdot), 
\langle (l_s'(1, \cdot ) - l_s'( 0 ,  \cdot) )  l_{s} (x_{s}, \cdot) \rangle 
\langle g(x^{-s,s'} , \cdot) ,  g(x^{-s,s'} , \cdot) \rangle  \\
&=& \frac{1}{N^2} \sum_{s,s' \in [N]}  (a^*  - x_s)   (a^*  - x_s') 
\langle (l_s(1, \cdot ) - l_s( 0 ,  \cdot) ) l_{s'} (x_{s'}, \cdot), 
\langle (l_s'(1, \cdot ) - l_s'( 0 ,  \cdot) )  l_{s} (x_{s}, \cdot) \rangle \\
&=& \frac{1}{N^2} \sum_{s,s' \in [N]}  (a^*  - x_s)   (a^*  - x_s')  \langle l_{s} (x_{s}, \cdot) ,   l_{s'} (x_{s'}, \cdot) \rangle c(s,s') 
\end{eqnarray*}
with
$$c (s,s') = \langle l_s (1, \cdot ) - l_s( 0 ,  \cdot)  , 
 l_{s'}(1, \cdot ) - l_{s'}( 0 ,  \cdot) \rangle
$$
{not depending on $x$. Here we used that by assumptions \eqref{ass22}  and \eqref{ass23}, $\langle g(x^{-s,s'} , \cdot) ,  g(x^{-s,s'} , \cdot) \rangle = 1.$}
Thus, when replacing $x$ by $Y$, a random vector in $\{0,1\}^N$ representing a Bernoulli random graph on $n$ vertices with edge probability $p$, then $\gKSS^2({\tilde q}, Y)$ is an average of locally dependent random variables. Hence, {using Stein's method} we obtain a  normal approximation {with bound}, as follows. Let 
$ {\mathcal I} = \{ (s,s'): s, s' \in [N] \}$
so that $|  {\mathcal I}| = N^2.$ For $\alpha = (s,s') \in  {\mathcal I}$ set
$$X_\alpha =  \frac{1}{N^2} (a^*  - Y_s)   (a^*  - Y_{s'})  \langle l_{s} (Y_{s}, \cdot) ,   l_{s'} (Y_{s'}, \cdot) \rangle c(s,s') ;$$
then 
$$\gKSS^2({\tilde q}, Y) =  \sum_{ \alpha \in  {\mathcal I}}   X_\alpha$$
and unless $\alpha$ and $ \beta$ share at least one vertex, the random variables $X_\alpha$ and $X_\beta$ are independent. 
Let $\mu_\alpha = \E X_\alpha$ and $\sigma^2 = \Var ( \gKSS^2({\tilde q}, Y) )) $; these quantities depend on the chosen kernels $l_s$. {We use the standardised count}  
$$W =  \sum_{\alpha \in  {\mathcal I}} \frac{X_\alpha - \mu_{\alpha}}{\sigma}
= \frac{1}{\sigma}   \gKSS^2({\tilde q}, Y) - \sum_{\alpha \in  {\mathcal I}} \frac{\mu_{\alpha}}{\sigma};$$
then $W$ has mean zero, variance 1, and results from Section 4.7 in \cite{chen2010} apply. In their notation, with 
$A_{(s,s')} = \{ \beta = (t,t') \in  {\mathcal I} : | \{ s,s' \} \cap \{t,t' \}| \ne \emptyset$, condition (LD1) is satisfied. Applying Theorem 4.13, p.134,  from \cite{chen2010} yields that, with $|| \cdot ||_1$ denoting $L_1$-distance, $\mathcal L$ denoting the law of a random variable, and $Z$ denoting a standard normal variable,
\begin{equation} \label{L1approx}
    || {\mathcal L}(W) - {\mathcal L}(Z)||_1 \le \sqrt{\frac{2}{\pi}} \E \left| \sum_{\alpha \in  {\mathcal I}}
    ( \xi_\alpha \eta_\alpha - \E ( \xi_\alpha \eta_\alpha)) 
    \right| + \sum_{\alpha \in  {\mathcal I}} \E | \xi_\alpha \eta_\alpha^2| 
    \le \sqrt{\frac{2}{\pi}}  \sqrt{ \Var ( \sum_{\alpha \in  {\mathcal I} } \xi_\alpha \eta_\alpha ) } + \sum_{\alpha \in  {\mathcal I}} \E | \xi_\alpha \eta_\alpha^2|.
\end{equation}
with $\xi_\alpha = ( X_\alpha - \mu_{\alpha} ) / {\sigma} $
and $\eta_\alpha = \sum_{\beta \in A_\alpha} X_\beta.$

{To obtain the dependence of the bound on $N$ we} 
assess its magnitude.
{First note that $| A_\alpha| \le 2 N$.} Using {that by the assumption \eqref{ass23},}
$ || l_s ||^2 =1$ for $s \in [N]$ and that $| a^* - Y_s| \le 1$ we can use the crude bounds $| c(s,s') | \le 4, $ so that $| X_\alpha| \le \frac{4}{N^2}  $ and $\mu_\alpha \le \frac{4}{N^2}  $, In particular, 
$| \xi_\alpha | \le \frac{8}{N^2 \sigma} $ and $ | \eta_\alpha| \le \frac{16}{N \sigma}$.
Thus, 
$$
\sum_{\alpha \in  {\mathcal I}} \E | \xi_\alpha \eta_\alpha^2| 
\le N^2 \times \frac{8}{N^2 \sigma } \times \frac{256}{N^2 \sigma^2} = \frac{2048}{N^2 \sigma^3}
.
$$

\medskip 
To evaluate the variance $\sigma^2$,
{$$
\sigma^2 = \sum_{\alpha \in {\mathcal I}} \Var X_\alpha + 
\sum_{\alpha \in {\mathcal I}} \sum_{\beta \in A_\alpha } Cov ( X_\alpha, X_\beta). 
$$
We evaluate these terms in turn. First, if $\alpha = (s,s)$ then 
$$  \Var X_\alpha
\le \frac{c(s,s)^2}{N^4} a^* ( 1 - a^*) 
$$
and if $\alpha = (s,s') $ with $s \ne s'$ then as 
$\langle l_s(x, \cdot), l_s(y, \cdot) \rangle \le 1$ from the assumption \eqref{ass23} and the Cauchy-Schwarz inequality, 
$$  \Var X_\alpha
\le  \E [ X_\alpha^2] \le \frac{c(s,s)^2}{N^4} [ a^* ( 1 - a^*) ]^2.
$$
Thus, 
$$ \sum_{\alpha \in {\mathcal I}} \Var X_\alpha \le  \frac{c(s,s)^2}{N^2} a^* ( 1 - a^*) . $$
Moreover, if $\alpha = (s,s)$ and $\beta = (s,t) \in A_\alpha$ then 
$$ | Cov (X_\alpha, X_\beta) | 
= \left| \frac{c(s,s) c(s,t) }{N^4}  \E \{ (a^*-Y_s)^3 (a^* - Y_t)  \langle l_{s} (Y_{s}, \cdot) ,   l_{t} (Y_{t}, \cdot) \rangle \} - \mu_\alpha \mu_\beta \right|
\le  2 \frac{|c(s,s) c(s,t) |}{N^4} 
$$ 
and there are order $N^2$ such terms $(\alpha, \beta)$ in the variance.}
The main contributions to the variance stem from $Cov(X_\alpha, X_\beta)$ for $\beta \in {\mathcal I}_\alpha$ and for $\alpha = (s.s') $ with $s \ne s'$.  Assumption \eqref{ass24}  guarantees that $c(s,s') \ne 0$.  Then for $\beta = (s, t),$ with $t \ne s$,
\begin{eqnarray*}  
Cov(X_\alpha, X_\beta) &=&  \frac{1}{N^4} c(\alpha) c(\beta) 
\E (a^* - Y_s)^2 (a^* - Y_s') (a^* - Y_t) 
 \langle l_{s} (Y_{s}, \cdot) ,   l_{s'} (Y_{s'}, \cdot) \rangle  \langle l_{s} (Y_{s}, \cdot) ,   l_{t} (Y_{t}, \cdot) \rangle  \\
 && 
 - \frac{1}{N^4}  (a^*)^4( 1-a^*)^4 c(s,s') c(s,t)  
\end{eqnarray*} 
and expanding the expectation gives a contribution of the order $N^{-4}$. The overall contribution of such covariance terms, of which there are order $N^3$, to the variance is hence of order $N^{-1}$, and therefore $\sigma^2$ is of order $N^{-1}$ and $\sigma $ is of order $\sqrt{N}$. 

Similarly, $$ \Var \left( \sum_{\alpha \in  {\mathcal I} } \xi_\alpha \eta_\alpha \right) = 
\Var \left( \sum_{\alpha \in  {\mathcal I} } \sum_{\beta \in A_\alpha }\xi_\alpha \xi_\beta  \right)  
=  \sum_{\alpha \in  {\mathcal I} } \sum_{\beta \in A_\alpha }
 \sum_{\gamma \in  {\mathcal I} } \sum_{\delta \in A_\gamma } Cov (\xi_\alpha \xi_\beta , \xi_\gamma \xi_\delta) 
 $$ is dominated by the covariances between $\xi_\alpha \xi_\beta$ and $\xi_\gamma \xi_\delta$ such that $\alpha$ and $\beta$ involve three distinct indices $s, s', t$, and $\gamma$ and $\delta$ involve three distinct indices $r, r', u$, and these two sets of three indices have non-zero intersection. These summands give a contribution of order $N^5 / (\sigma^4 N^8)$, which is of order $N^{-1}$, to the variance $ \Var \left( \sum_{\alpha \in  {\mathcal I} } \xi_\alpha \eta_\alpha \right) $.  {A crude bound is obtained as 
 $ \Var ( \sum_{\alpha \in  {\mathcal I} } \xi_\alpha \eta_\alpha )
 \le \frac{512}{\sigma^4 N^3}. 
 $
 These estimates give  that the bound in Eq.\eqref{L1approx} is of the order $N^{-\frac12}$.  {All moment expressions can be bounded explicitly and thus the constant $C$ can be computed explicitly. The conclusion follows.} 
}

\bigskip 
{\bf{Proof of Proposition \ref{bootstrapnormal}}}

\medskip 
{For convenience we re-state the result here again.}

{\bf Proposition \ref{bootstrapnormal}.}
{\it
Let    
$$Y  = \frac{1}{B^2}  \sum_{s, t \in [N] } ( k_s k_t - \E (k_s k_t) ) h_x (s, t).$$
Assume that  $h_x$ is  bounded such that $Var(Y)$ is non-zero. 
Then if $Z$ is mean zero normal with variance $Var(Y)$, there is an {explicitly computable}  constant $C>0$ such that for all three times  continuously differentiable functions $g$ with bounded derivatives up to order 3, 
$$
| \E [ g(Y) ] - \E [g(Z)] \le \frac{C}{B}. 
$$ 
}

\medskip 
 For normal approximation in the presence of weak dependence, Charles Stein \citep{stein1986approximate} introduced the method of exchangeable
pairs: construct a sum $W'$ such that $(W,W')$ form an exchangeable pair, and
such that $\IE^W(W'-W)$ is (at least approximately) linear in $W$. This
linearity condition arises naturally when thinking of correlated bivariate
normals. As a multivariate generalisation, \cite{reinert2009multivariate}
considered the general setting that 
\ben                                                                \label{1}
    \IE^W(W'-W) = - \Lambda W + R
\ee
for a matrix $\Lambda$ and a vector $R$ with small $\IE|R|$  is treated. In  a
followup paper~\citep{meckes2009stein} the results by 
\cite{chatterjee2008multivariate} and  \cite{reinert2009multivariate} are combined using
slightly different smoothness conditions on test functions as compared to
\cite{reinert2009multivariate}.
In \cite{reinert2009multivariate} it was found that a statistic of interest can often be embedded into a larger vector of statistics such that 
\eqref{1} holds with $R=0$; this embedding does not
directly correspond to Hoeffding projections, although it is related to the
latter. In \cite{reinert2010random} this embedding is applied to complete non-degenerate
U-statistics. among other examples. In this example the limiting covariance
matrix is not of full rank; yet the bounds on the normal approximation are of
the expected order.

The general setup is as follows. Denote by $W = (W_1,W_2,\dots,W_d)^t$ random vectors in $\IR^d$, where $W_i$ are
$\IR$-values random variables for $i=1,\dots,d$. We denote by $\Sigma$
symmetric, non-negative definite matrices, and hence by $\Sigma^{1/2}$ the
unique symmetric square root of~$\Sigma$. Denote by $\Id$ the identity matrix,
where we omit the dimension~$d$. Let  $Z$ denote a random
variable having standard $d$-dimensional multivariate normal distribution. We
abbreviate the transpose of the inverse of a matrix $\Lambda$ as $\Lambda^{-t}
:= (\Lambda^{-1})^t$.

For derivatives of smooth functions $h:\IR^d\to\IR$, we use the notation $\nabla
$ for the gradient operator. Denote by $\norm{\cdot}$ the supremum norm for both
functions and  matrices. If the corresponding derivatives exist for some
function $g:\IR^d\to\IR$, we abbreviate $\abs{g}_1
:=\sup_{i}\bnorm{\frac{\partial}{\partial x_i} g}$, $\abs{g}_2
:=\sup_{i,j}\bnorm{\frac{\partial^2}{\partial x_i \partial x_j}g}$, and so on.

The following result is shown in \cite{reinert2009multivariate}. 

\begin{theorem}[c.f. Theorem~2.1 \cite{reinert2009multivariate}] \label{thm1} Assume that
$(W,W')$ is an exchangeable pair of\/ $\IR^d$-valued random variables such that 
\ben								\label{2}
    \IE W = 0,\qquad\IE W W^t = \Sigma, 			
\ee 
with $\Sigma \in\IR^{d\times d}$ symmetric and positive definite. Suppose
further that~\eqref{1} is satisfied for an invertible matrix $\Lambda$ and a
$\sigma(W)$-measurable random variable~$R$. Then, if $Z$ has $d$-dimensional
standard normal distribution, we have for every three times differentiable function $g$
\beas                                                          \label{3}
    \babs{\IE g(W)-\IE g(\Sigma^{1/2}Z)}
    \leq 
    \frac{\abs{g}_2}{4}  I
    +\frac{\abs{g}_3}{12} II    
    +\bbklr{\abs{g}_1 +\ahalf d\norm{\Sigma}^{1/2} \abs{g}_2} III   ,  
\enas
where, with $\lambda^{(i)} = \sum_{m=1}^d\abs{(\Lambda^{-1})_{m,i}}$, 
\ba
I &= \sum_{i,j=1}^d{\lambda^{(i)}} \sqrt{\Var{\IE^W(W'_i-W_i)(W'_j-W_j)}},\\
II &= \sum_{i,j,k=1}^d{\lambda^{(i)}} \IE\babs{(W'_i-W_i)(W'_j-W_j)(W'_k-W_k)},\\
III &= \sum_{i}\lambda^{(i)}\sqrt{\IE R_i^2}.
\ee
\end{theorem}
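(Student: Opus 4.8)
The plan is to follow the generator route to multivariate normal approximation and to exploit the linear-regression property of the exchangeable pair. First I would introduce the Stein operator characterising $N(0,\Sigma)$, namely $\mathcal{A}f(w) = \sum_{i,j}\Sigma_{ij}\,\partial_{ij}f(w) - \sum_i w_i\,\partial_i f(w)$, for which $\IE[\mathcal{A}f(Z')] = 0$ whenever $Z'\sim N(0,\Sigma)$. For the fixed test function $g$ I would take $f = \Phi_g$ to solve the Stein equation $\mathcal{A}f = g - \IE g(\Sigma^{1/2}Z)$ via the Ornstein--Uhlenbeck semigroup representation $f(w) = -\int_0^\infty \bklr{\IE[g(e^{-s}w + \sqrt{1-e^{-2s}}\,\Sigma^{1/2}Z)] - \IE g(\Sigma^{1/2}Z)}\,ds$. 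Differentiating under the integral and using $\int_0^\infty e^{-ks}\,ds = 1/k$ yields the smoothness bounds $\abs{f}_k \le \tfrac1k\abs{g}_k$ for $k=1,2,3$; these are exactly what produce the coefficients $\abs{g}_1$, $\tfrac14\abs{g}_2$ and $\tfrac1{12}\abs{g}_3$ in the final bound. The quantity to control then becomes $\babs{\IE g(W) - \IE g(\Sigma^{1/2}Z)} = \babs{\IE[\mathcal{A}f(W)]}$.

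The second ingredient is exchangeability: any antisymmetric function of $(W,W')$ has mean zero. Applying this to $G(w,w') = (\nabla f(w') + \nabla f(w))^\top \Lambda^{-1}(w'-w)$, which satisfies $G(w',w) = -G(w,w')$, gives $\IE[G(W,W')] = 0$. I would then Taylor-expand $\nabla f(W')$ about $W$ to second order with a third-order remainder, substitute into this identity, and take the conditional expectation $\IE^W$. Inserting the regression property $\IE^W(W'-W) = -\Lambda W + R$ and using $\Lambda^{-1}\Lambda = \Id$, the first-order term collapses to $-2\,W^\top\nabla f(W) + 2(\Lambda^{-1}R)^\top\nabla f(W)$, so that (after dividing the whole identity by $2$) the drift $-\sum_i W_i\partial_i f(W)$ of $\mathcal{A}f$ is reproduced, while the $R$-contribution feeds into term $III$ through $\abs{f}_1\le\abs{g}_1$ and $\sum_m\abs{(\Lambda^{-1})_{mi}} = \lambda^{(i)}$.

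The heart of the proof, and the step I expect to be the main obstacle, is matching the second-order term, which after conditioning reads $\sum_{i,j,m}(\Lambda^{-1})_{mi}\,\partial_{mj}f(W)\,\IE^W[(W'_i-W_i)(W'_j-W_j)]$. The key computation, from exchangeability together with $\IE WW^\top = \Sigma$, is the second-moment identity $\IE[(W'_i-W_i)(W'_j-W_j)] = 2(\Lambda\Sigma)_{ij}$ up to $R$-dependent corrections of the form $\IE[W_jR_i]$; then $\sum_i(\Lambda^{-1})_{mi}(\Lambda\Sigma)_{ij} = \Sigma_{mj}$ turns the deterministic part into $2\sum_{m,j}\Sigma_{mj}\partial_{mj}f(W)$, recovering the diffusion part of $\mathcal{A}f$. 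The delicate bookkeeping is that one must replace the random $\IE^W[(W'_i-W_i)(W'_j-W_j)]$ by its mean; the fluctuation is bounded by Cauchy--Schwarz through $\sqrt{\Var(\IE^W(W'_i-W_i)(W'_j-W_j))}$, weighted by $\lambda^{(i)}$ and by $\abs{f}_2\le\tfrac12\abs{g}_2$, which is precisely term $I$. Keeping the $\Lambda^{-1}$-weights aligned with the correct indices throughout this matching, and ensuring the $R$-corrections to the second moment are absorbed into term $III$ via $\sqrt{\IE W_j^2}\le\norm{\Sigma}^{1/2}$ (contributing the $\tfrac12 d\norm{\Sigma}^{1/2}\abs{g}_2$ factor), is the main technical burden.

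Finally I would collect the three error sources. The third-order Taylor remainder of $\nabla f(W')$, bounded using $\abs{f}_3\le\tfrac13\abs{g}_3$ and the absolute third moments $\IE\abs{(W'_i-W_i)(W'_j-W_j)(W'_k-W_k)}$ with the $\lambda^{(i)}$ weights, yields term $II$; the conditional-second-moment fluctuation yields $I$; and the regression remainder $R$, entering both through the first-order term and through the second-moment correction, yields $III$. Assembling these contributions and invoking the derivative bounds of the first step gives $\babs{\IE g(W) - \IE g(\Sigma^{1/2}Z)} \le \tfrac14\abs{g}_2\,I + \tfrac1{12}\abs{g}_3\,II + \bklr{\abs{g}_1 + \tfrac12 d\norm{\Sigma}^{1/2}\abs{g}_2}III$, as claimed; since every moment and variance appearing is finite and computable, all constants are explicit.
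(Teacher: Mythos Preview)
Your sketch is a faithful outline of the original proof in \cite{reinert2009multivariate}, and the main steps --- the Ornstein--Uhlenbeck solution of the Stein equation with bounds $\abs{f}_k\le\tfrac1k\abs{g}_k$, the antisymmetric-function identity from exchangeability, the Taylor expansion of $\nabla f(W')$ about $W$, and the replacement of $\IE^W[(W'_i-W_i)(W'_j-W_j)]$ by its mean $2(\Lambda\Sigma)_{ij}$ (modulo $R$-corrections involving $\IE[R_iW_j]$ bounded via $\sqrt{\IE R_i^2}\,\norm{\Sigma}^{1/2}$) --- are all correctly identified.

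However, you should be aware that the present paper does \emph{not} prove this theorem. It is quoted verbatim from \cite{reinert2009multivariate} as an external tool (the paper explicitly says ``The following result is shown in \cite{reinert2009multivariate}'') and then applied to the particular exchangeable pair built from the multinomial re-sampling counts in order to derive Proposition~\ref{bootstrapnormal}. So there is no ``paper's own proof'' to compare against: your write-up is effectively a reconstruction of the argument in the cited reference, not of anything appearing in the present work.
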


Here we use the approach for statistics of the form 
 $   Y  = \frac{1}{B^2}  \sum_{s, t \in [N] } ( k_s k_t - \E (k_s k_t) ) h (s, t).$
The subscript $x$ is suppressed in $h_x$  to simplify notation. 
To apply Theorem \ref{thm1} we employ two additional statistics; including $Y$ as $W_1$, 
\beas 
   W_1  &=&\frac{1}{B^2}  \sum_{s, t \in [N] } ( k_s k_t - \E (k_s k_t) ) h (s, t) \nonumber\\
   W_2 &=& \frac{1}{B^2} \sum_{s, t \in [N] } ( k_s - \E (k_s ) ) h (s, t) \label{y3}\\
    W_3 &=& \frac{1}{B^2} \sum_{s \in [N] } ( k_s - \E (k_s ) ) h (s, s) \label{y4}.
\enas 
Given ${\bf{k}}=(k_1, \ldots, k_N)$  we construct an exchangeable pair $({\bf{k}}, {\bf{k'}})$ by
choosing an index $I \in [N]$ such that
$\IP (I=i) = \frac{k_i}{B}$
and if $I=i$ we set $k'_i = k_i - 1$ (we take a ball out of bin $i$ in the multinomial construction). 
Then we pick $J \in [N]$ uniformly and if $J=j$ we set $k_j' = k_j +1$ - we add the ball to bin $j$ which we took away from bin $i$.
All other $k_l's$ are left unchanged; $k_l' - k_l$ if $l \ne I, J$. Note that $I=J$ is possible in which case there is no change. 
Based on this exchangeable pair we set 
\beas 
   W_1'  &=& \frac{1}{B^2}\sum_{s, t \in [N] } ( k_s' k_t`- \E (k_s' k_t') ) h (s, t) \label{y1ex} \\
   W_2' &=& \frac{1}{B^2} \sum_{s, t \in [N] } ( k_s' - \E (k_s' ) ) h (s, t) \label{y3ex}\\
    W_3' &=& \frac{1}{B^2} \sum_{s \in [N] } ( k_s' - \E (k_s' ) ) h (s, s) \label{y4ex}.
\enas 
With $W= (W_1, W_2, W_3)$ and  $W'= (W_1', W_2',W_3')$ we have obtained an exchangeable pair $(W,W')$. Moreover $W$ has mean zero and finite covariance matrix. First we calculate $  \IE^W(W'-W)$ componentwise, starting with the easiest case to illustrate the argument. 
For this calculation we use that
\beas 
k'_I - k_I  &=& -1\\
k'_J - k_J &=& 1\\
 k_s' k_t`-  k_s k_t &=& (k_s' - k_s) (k_t' - k_t ) + k_s (k_t' - k_t) + k_t (k_s' - k_s). 
\enas 

Then, conditioning on $I$ and $J$, 
\beas 
\IE^W(W_3'-W_3) &=& \frac{1}{B^2} \sum_{s \in [N] } \IE^W( k_s' - k_s) h (s, s)\\
&=& \frac{1}{B^2}\frac{1}{BN} \sum_{s \in [N] }\sum_{i \in [N] } k_i \sum_{j \in [N] }  ( -  {\bf{1}}(s=i) h(i,i) + {\bf{1}}(s=j)  h (j,j) ) \\
&=&  - \frac{1}{B^2}\frac{1}{B} \sum_{i \in [N] } k_i h(i,i)  + \frac{1}{N}\frac{1}{B^2} \sum_{j \in [N] } h (j,j) \\
&=&  - \frac{1}{B^2} \frac{1}{B} \sum_{i \in [N] } (  k_i - \E (k_i)) h(i,i)\\
&=&  - \frac{1}{B} W_3. 
\enas 
Similar arguments yield 
$
\IE^W(W_2'-W_2) 
=  - \frac{1}{B} W_2. 
$
Finally, 
\beas 
\lefteqn{\IE^W(W_1'-W_1) }\\
 &=&\frac{1}{B^2}\sum_{s, t \in [N] }  \IE^W( k_s' k_t`-k_s k_t ) h (s, t)\\
&=& \frac{1}{B^2}\sum_{s, t \in [N] }  \IE^W [ (k_s' - k_s) (k_t' - k_t ) + k_s (k_t' - k_t) + k_t (k_s' - k_s)] h (s, t)\\
&=&\frac{1}{B^2} \sum_{s, t \in [N] }  \IE^W [ (k_s' - k_s) (k_t' - k_t ) ] h (s, t) + 2 \sum_{s, t \in [N] }  \IE^W [ k_t (k_s' - k_s) ] h (s, t).
\enas 
Here we used that $h(s,t) = h(t,s)$ in the last step. We tackle the conditional expectations separately. Again using  $h(s,t) = h(t,s)$, 
\beas
\lefteqn{ \sum_{s, t \in [N] } \IE^W [ (k_s' - k_s) (k_t' - k_t ) ] h (s, t) } \\
&=& \frac{1}{BN} \sum_{s,t \in [N] }\sum_{i \in [N] } k_i \sum_{j \in [N]} ({\bf{1}}(s=I, t=J)  + {\bf{1}}(s=J, t=I) 
[ (k_s' - k_s) (k_t' - k_t ) ] h (s, t)  \\
&=& -  \frac{2}{BN} \sum_{i \in [N] } k_i \sum_{j \in [N]}  {\bf{1}}( i \ne j)  h (i,j)  \\
&=& -  \frac{2}{BN} \sum_{i \in [N] } \sum_{j \in [N]}    k_i h (i,j)  +  \frac{2}{BN} \sum_{i \in [N] } k_i  h (i,i)  \\
&=& -  \frac{2}{BN} W_2 +  \frac{2}{BN} W_3. 
\enas 
Here the centering terms from $W_2$ and $W_3$ add up to 0 because the conditional expectation has mean zero, and are thus not included in the calculation. 

Moreover, 
\beas
\lefteqn{\sum_{s, t \in [N] }  \IE^W [  k_t (k_s' - k_s) ] h (s, t)}\\
&=& \frac{1}{BN} \sum_{s,t \in [N] }\sum_{i \in [N] } k_i \sum_{j \in [N]} \left( 
-  {\bf{1}}(s=i)  k_t h (i, t) + {\bf{1}}(s=j)   k_t h (j, t)  
\right) 
\\
&=& -  \frac{1}{B} \sum_{t \in [N] }\sum_{i \in [N] } k_i k_t  h (i, t) 
+ 
\frac{1}{N} \sum_{t \in [N] }\sum_{j \in [N]} 
   k_t h (j, t)  \\
   &=& -  \frac{1}{B} W_1 + \frac{1}{N}  W_2. 
\enas 
Hence
\beas 
\IE^W(W_1'-W_1)  &=&  -  \frac{2}{BN} W_2 +  \frac{2}{BN} W_3 -  \frac{2}{B} W_1 + \frac{2}{N}  W_2
\\
&=&  \frac{2}{BN} W_3+ \frac{2(B-1) }{BN}  W_2-  \frac{2}{B} W_1 .
\enas 
Hence \eqref{1} is satisfied with $R=0$ and 
\be
    \Lambda = \frac{1}{B}
    \begin{bmatrix}
  -2 & \frac{2(B-1)}{N} & \frac{2}{N} \\
 0 & -1 & 0 \\
0 & 0 & -1\\
    \end{bmatrix}
\ee
giving
$$\lambda^{(1)} = \frac{B}{2}; \lambda^{(2)} = B \frac{| N- B+1| }{N}; \lambda^{(3)} = \frac{B(N+1)}{N}. $$ 
{With} $B = FN $ we can bound 
$$\lambda^{(i)} \le  \max (F, 1/2)  B, \quad i=1, 2, 3.$$ 
To complete the argument we need to bound $I$ and $II$ from Theorem \ref{thm1}. 

\medskip To bound the conditional variance term $I$ from Theorem \ref{thm1}, 
\beas
I &= &\sum_{i,j=1}^3{\lambda^{(i)}} \sqrt{\Var{\IE^W(W'_i-W_i)(W'_j-W_j)}} \le {\max (F, 1/2) } \,   B \sum_{i,j=1}^3 \sqrt{\Var{\IE^W(W'_i-W_i)(W'_j-W_j)}}.
\enas
Instead of conditioning on $W$ we condition on ${\bf{k}}$ this conditioning would only increase the conditional variance. 
The largest variance contribution is from 
\beas
\lefteqn{\IE^{\bf{k}}(W'_1-W_1)^2 }\\
&=& \frac{1}{B^4} \sum_{s, t \in [N]} \sum_{u,v \in [N]} \IE^{\bf{k}} [ ( k_s' k_t'- k_s k_t) ( k_u'  k_v'- k_u k_v) h(s,t)  h(u,v)  ] \\
&=& \frac{1}{B^4}  \frac{1}{BN} \sum_{i \in [N]}  \sum_{j \in [N]} \sum_{s, t\in [N]}  \sum_{u,v\in [N]}  \IE^{\bf{k}} [ k_i \left( (k_s' - k_s) (k_t' - k_t ) + 2 k_s (k_t' - k_t) \right) \times \\
&&\qquad \qquad \qquad \qquad \qquad \qquad \qquad \quad  \left( (k_u' - k_u) (k_v' - k_v ) + 2 k_u (k_v' - k_v) \right)  h(s,t)  h(u,v)  ] \\
 &=& \frac{1}{B^4}  \frac{1}{BN} \sum_{i \in [N]}  \sum_{j \in [N]} \sum_{s, t\in [N]}  \sum_{u,v\in [N]}   \IE^{\bf{k}}[  k_i (k_s' - k_s) (k_t' - k_t )   (k_u' - k_u) (k_v' - k_v )   h(s,t)  h(u,v)  ] \\
 &&+ 2 \frac{1}{B^4}  \frac{1}{BN} \sum_{i \in [N]}  \sum_{j \in [N]} \sum_{s, t\in [N]}  \sum_{u,v\in [N]}  \IE^{\bf{k}} [  k_i  k_u (k_s' - k_s) (k_t' - k_t )     (k_v' - k_v) ] h(s,t)  h(u,v)  \\
 &&+ 2  \frac{1}{B^4}  \frac{1}{BN} \sum_{i \in [N]}  \sum_{j \in [N]} \sum_{s, t\in [N]}  \sum_{u,v\in [N]}  \IE^{\bf{k}} [  k_i k_s  (k_t' - k_t)  (k_u' - k_u) (k_v' - k_v )  h(s,t)  h(u,v)  ] \\
 &&+ 4  \frac{1}{B^4}  \frac{1}{BN} \sum_{i \in [N]}  \sum_{j \in [N]} \sum_{s, t\in [N]}  \sum_{u,v\in [N]}  \IE^{\bf{k}} [  k_i  k_s k_u  (k_t' - k_t)  (k_v' - k_v)  h(s,t)  h(u,v)  ] .
\enas 
Due to the exchangeable pair construction many sums simplify and the largest contribution to the variance is the last term; 
\beas
\lefteqn{4  \frac{1}{B^4}  \frac{1}{BN} \sum_{i \in [N]}  \sum_{j \in [N]} \sum_{s, t\in [N]}  \sum_{u,v\in [N]}  \IE^{\bf{k}} [  k_i  k_s k_u  (k_t' - k_t)  (k_v' - k_v)  h(s,t)  h(u,v)  ] }\\
&=& 4  \frac{1}{B^5 N}  \ \sum_{i \in [N]}  \sum_{j \in [N]} \sum_{s, t\in [N]}  \sum_{u,v\in [N]}  \IE^{\bf{k}} [  k_i  k_s k_u  (k_t' - k_t)  (k_v' - k_v)  h(s,t)  h(u,v)  ]  (  {\bf{1}}( t=i) +  {\bf{1}}( t=j)) \\ 
\\
&=& - 4  \frac{1}{B^5 N}  \ \sum_{i \in [N]}  \sum_{j \in [N]} \sum_{s \in [N]}  \sum_{u,v\in [N]}  \IE^{\bf{k}} [  k_i  k_s k_u  (k_v' - k_v)  h(s,i)  h(u,v)  ]  (  {\bf{1}}( v=i) +  {\bf{1}}( v=j)) \\  
&&+  4  \frac{1}{B^5 N}  \ \sum_{i \in [N]}  \sum_{j \in [N]} \sum_{s \in [N]}  \sum_{u,v\in [N]}  \IE^{\bf{k}} [  k_i  k_s k_u  (k_v' - k_v)  h(s,j)  h(u,v)  ]   (  {\bf{1}}( v=i) +  {\bf{1}}( v=j)) \\
&=& 4  \frac{1}{B^5 N}  \ \sum_{i \in [N]}  \sum_{j \in [N]} \sum_{s \in [N]}   k_i  k_s k_u   h(s,i)  h(u,i)   - 4  \frac{1}{B^5 N}  \ \sum_{i \in [N]}  \sum_{j \in [N]} \sum_{s \in [N]}   k_i  k_s k_u   h(s,i)  h(u,j)    \\  
&& -   4  \frac{1}{B^5 N}  \ \sum_{i \in [N]}  \sum_{j \in [N]} \sum_{s \in [N]}  \sum_{u \in [N]}   k_i  k_s k_u   h(s,j)  h(u,i)  +  4  \frac{1}{B^5 N}  B  \sum_{j \in [N]} \sum_{s \in [N]}  \sum_{u \in [N]}  k_s k_u    h(s,j)  h(u,j) .
\enas 
These terms have a variance contribution of order 
$ \frac{1}{B^{10} N^2}  \frac{B^6}{N^6} N^8 = \frac{1}{B^4}$ as long as $h(i,j)$ is bounded. 
The mixed variances in $I$ can be bounded using the Cauchy-Schwarz inequality. 
Overall the contribution to the term $I$ of Theorem \ref{thm1} is thus of order
$B \sqrt{ \frac{1}{B^4}} = \frac{1}{B}.$

\medskip For the term $II$ of Theorem \ref{thm1}, 
\beas
\sum_{a,b,c=1}^3{\lambda^{(a)}} \IE\babs{(W'_a-W_a)(W'_b-W_b)(W'_c-W_c)} 
&\le & {\max (F, 1/2) }\,  B \sum_{a,b,c=1}^3  \IE\babs{(W'_a-W_a)(W'_b-W_b)(W'_c-W_c)} .
\enas 
The largest contribution to this term is 

\beas 
\lefteqn{\IE\babs{(W'_1-W_1)^3}}\\
&\le & ||h||^3 \frac{1}{B^6} \sum_{s, t\in [N]}  \sum_{u,v\in [N]}  \sum_{x,y \in [N]} 
 \IE | ( k_s' k_t'-k_s k_t ) ( k_u' k_v'-k_u k_v )( k_x' k_y'-k_x k_y )|  \\
 &=&  ||h||^3  \frac{1}{B^6}  \frac{1}{BN} \sum_{i \in [N]}  \sum_{j \in [N]} \sum_{s, t\in [N]}  \sum_{u,v\in [N]}  \sum_{x,y \in [N]} \IE \left| k_i  \left( (k_s' - k_s) (k_t' - k_t ) + 2 k_s (k_t' - k_t) \right) \right. \\
 && \left.  \left( (k_u' - k_u) (k_v' - k_v ) + 2 k_u (k_v' - k_v) \right) \left( (k_x' - k_x) (k_y' - k_y ) + 2 k_x (k_y' - k_y) \right) \right| \\
 &\le &  ||h||^3  \frac{1}{B^6}  \frac{1}{BN} \sum_{i \in [N]}  \sum_{j \in [N]} \sum_{s, t\in [N]}  \sum_{u,v\in [N]}  \sum_{x,y \in [N]}  \IE [\left|  k_i    \left( (k_s' - k_s) (k_t' - k_t ) + 2 k_s (k_t' - k_t) \right) \right. \\
 && \left.  \left( (k_u' - k_u) (k_v' - k_v ) + 2 k_u (k_v' - k_v) \right) \left( (k_x' - k_x) (k_y' - k_y ) + 2 k_x (k_y' - k_y) \right) \right| .
\enas 
With $|| h|| = \max_{i,j} | h(i,j)|$ the leading term in this expression is 
\beas 
 \frac{8   ||h||^3 }{B^7 N}   \sum_{i \in [N]}  \sum_{j \in [N]} \sum_{s, t\in [N]}  \sum_{u,v\in [N]}  \sum_{x,y \in [N]}  \IE    \left|  k_i   k_s k_x (k_t' - k_t) k_u (k_v' - k_v)  (k_y' - k_y) \right| .
\enas 
Now,  not all of $t, v, y$ can be distinct for a non-zero contribution to this term; we can bound it by 
\beas
{ \frac{16   ||h||^3 }{B^7N }  
  \sum_{i, j, s, t, u, v, x  \in [N]}  \IE  k_i k_s k_u k_x  }
  ( {\bf{1}}( t=i) +  {\bf{1}}( t=j)) ( {\bf{1}}( v=i) +  {\bf{1}}( v=j))
 &\le& 
 \frac{64}{||h||^3 B^3}.
\enas 
Here we used that $\sum_i k_i = B$. All other cross-expectations can be bounded using the Cauchy-Schwarz inequality. Hence we conclude that the term $II$  in Theorem \ref{thm1} is of order $B^{-2}$. {All higher moments can be bounded explicitly and hence $C$ can be bounded explicitly.} The conclusion follows. 

\section{Graph Kernels}\label{supp:graph_kernel}

For a vertex-labeled graph {$x = \{ x_{ij} \}_{1 \le i , j \le n} \in \mathcal{G}^{lab}$}, with label range { $\{1, \ldots ,c \} = [c]$}, denote the vertex set by $V$, the edge set by $E$, and the label set by $\Sigma$. Consider {an} vertex-edge mapping $\psi :V \cup E \rightarrow [c]$. {In this paper we  use the following graph kernels.}

\paragraph{Vertex-Edge  Histogram Gaussian {Kernels}}
{The} \emph{vertex-edge label histogram} $h = (h^{111},h^{211},\dots, h^{ccc})$  ${= h(\psi, x) 
}$  has as components
$h^{l_1l_2l_3} = \left|\{v\in V, (v,u)\in E \, | \,  \psi(v,u) = l_1, \psi(u)= l_2,\psi(v)=l_3\}\right|$, for $l_1, l_2, l_3 \in [c]$;  {it is a combination of vertex label counts and edge label counts.} 
{Let 
$\langle h (x) , h (x') \rangle = \sum_{l_1,l_2,l_3} h (x)^{l_1,l_2,l_3}h {(x')}^{l_1,l_2,l_3}$.
{Following} \cite{sugiyama2015halting}, we define}
the vertex-edge histogram Gaussian (VEG) kernel between two graphs
{$x, x'$} 
 as 
$$
K_{\small VEG}(x, x';\sigma) = \exp{\left\{-\frac{\|h(x) - h(x')\|^2}{2\sigma^2}\right\}}. 
$$
{The VEG kernel is a special case of histogram-based kernels}  for assessing graph similarity {using feature maps, which are introduced} in \cite{kriege2016valid}.  {Adding a Gaussian RBF as in} \cite{sugiyama2015halting}, {yielding the VEG kernel, significantly improved problems such as classification accuracy, see} 
\citep{kriege2020survey}.
{In our implementation, as in \cite{sugiyama2018graphkernels}, $\psi$ is induced by the vertex index. If the vertices are indexed by $i \in [n]$  then  the label of vertex $v_i$ is 
$\psi(v_i)=i$; for edges,  $\psi(u,v)=1$ if $(u,v)\in E$ is an edge and $0$ otherwise.}

\paragraph{Geometric Random Walk Graph Kernels} 
A $k$-step random walk graph kernel \citep{sugiyama2015halting} is built as follows. Take $A_{\otimes}$
as the adjacency matrix of the direct (tensor) product $G_{\otimes} = (V_{\otimes},E_{\otimes},\psi_{\otimes})$ \citep{gartner2003graph} between $x$ and $x'$ such that {vertex labels match and edge labels match:} 
$$
V_{\otimes} = \{(v,v')\in V \times V' | \psi(v) = \psi'(v')\},
$$
$$
E_{\otimes} = \{((v,u),(v',u')))\in E \times E'\,  |  \, \psi (v,u) = \psi (v',u')\},
$$
and {use} the corresponding label mapping
$\psi_{\otimes}(v,v') = \psi(v) = \psi'(v')
$; $\psi_{\otimes}((v,v'),(u,u')) = \psi(v,u) = \psi'(v',u')
$. {With}  input parameters $(\lambda_0, \dots, \lambda_k)$, {the} $k-$step random walk kernel between two graphs
{$x, x'$}  is defined as 
$$
K_{\otimes}^{k}(x,x') = \sum_{i,j=1}^{ |V_{\otimes}|}\left[\sum_{t=0}^k \lambda_t A_{\otimes}^{\top}\right]_{i,j}.
$$

A geometric random walk kernel between two graphs
{$x, x'$}  takes the $\lambda$-weighted infinite sum from the $k$ step random walk kernels:
$$
K_{GRW}(x, x' ) =  \sum_{i,j=1}^{ |V_{\otimes}|} \left[(I - \lambda A_{\otimes})^{-1}\right]_{i, j} . 
$$
{In our implementation we choose, $\lambda_l = \lambda, \forall l=1,\dots, k$ and $\lambda=\frac{1}{3}$.}

\paragraph{Shortest Path Graph Kernels} {Shortest Path Graph Kernels, introduced by  \cite{borgwardt2005shortest},  are based on a transformation of the graph $x$, the Floyd transformation. The Floyd
transformation $F$  turns the original graph into the so-called shortest-path graph $y=F(x)$;  the  graph $y$ is a complete graph with vertex set $V$ with each edge labelled by the shortest distance in $x$ between the vertices on either end of the edge.} 
{For two networks  $x$ and $x'$}  the 1-step random walk kernel { $K^1_{\otimes}$} between the shortest-path graphs {$y = F(x)$ and $y'=F(x')$ } gives the  shortest-path (SP) kernel between $x$ and $x'$;  
$$
K_{SP}(x,x') = K^1_{\otimes}(y, y').
$$
 Lemma 3  in \cite{borgwardt2005shortest} showed that this kernel is  positive definite. 
 

\paragraph{Weisfeiler-Lehman Graph Kernels} Weisfeiler-Lehman Graph Kernels {have been}  proposed by \cite{shervashidze2011weisfeiler}; {these kernels are based on the Weisfeiler-Lehman test for graph isomorphisms and involve}   counting  matching subtrees between two given graphs. Theorem 3 in \cite{shervashidze2011weisfeiler} showed the positive definiteness of {these}  kernels.  {In our implementation, we adapted an} efficient implementation from the $\mathtt{graphkernel}$ package \citep{sugiyama2018graphkernels}. 

\section{Vector-Valued RKHS}\label{supp:vvRKHS}

{The general set-up for vector-valued RKHS for finite networks is as follows.} {Let $N = {n \choose 2}$  denote the index set of vertex pairs in a graph $x \in \{0,1\}^N$. For $s \in [N]$ let $x^{-s} \in \{0, 1\}^{N-1} =: {\mathcal X}^{-s} $ denote the collection of edge indicators except the one for $s$ and let $x_s \in \{0,1\}=: \mathcal{X}^{s} $ denote the edge indicator for $s$. } 
{When the underlying graph is random, we}  use  similar {notation}   $X^{-s}$, $X^{s}$ to denote the  corresponding random variables . 
For $s\in[N]$, let
$l_{s}:\mathcal{X}^{s}\times\mathcal{X}^{s}\rightarrow \mathbb{R}$
be 
reproducing kernels, with associated RKHS $\mathcal{H}_{l_{s}}$.
{Let} 
 $\varphi_{s}:x_{s}\in\mathcal{X}^{s}\mapsto l_{s}(\cdot,x^{s})\in\mathcal{H}_{l_{s}}$ denote the corresponding feature maps of $(l_{s})_{s\in[N]}$.

The RKHS kernels  $l_s$,  or those used in \cite{chwialkowski2016kernel} or \cite{ liu2016kernelized}, have scalar outputs, {while the} 
RKHS kernel $\ell_{-s}$ has an output in $\mathcal{L}(\mathcal{H}_{l_s})$, {the Banach space of bounded operators from $\H_{l_s}$ to $\H_{l_s}$;} 
we refer {to the  space $\H_{l_s}$} for $\ell_{-s}$ as a vector-valued RKHS (vvRKHS). 
All the kernels {used here are assumed to be}  positive definite and bounded.
As composition preserves positive definiteness, we then consider the kernel: $K: (\mathcal{X}^{s}\otimes \mathcal{X}^{-s}) \times (\mathcal{X}^{s}\otimes \mathcal{X}^{-s}) \to \R$, with associate RKHS $\H_K$. 

{{In our experiments we} assume}
that the $l_{s}$ corresponds to the same RKHS function: $l_{s} \equiv l, \forall s \in [N]$.
 We further assume the vvRKHS $\H_{\ell}$ has the form 
$$
\ell (x^{-s}, (x')^{-s'}) = k(x^{-s}, (x')^{-s'}) \mathbb{I}_{\H_l\times \H_l},
$$
where $\mathbb{I}_{\H_l\times \H_l}$ is the identity map from $\H_l$ to $\H_l$ and $k$ is the graph kernel of choice.
The RKHS  defined via composition reads
\begin{equation*}
K((x^s, x^{-s}), ((x')^{s'}, (x')^{-s'})) = k(x^{-s}, (x')^{-s'})l(x^{s}, (x')^{s'}).
\end{equation*}
{{For a single observed network $x$, as} $\H_{l}$, $\H_{\ell}$ are the same for all $s$, 
{it holds that} for  $s, s' \in [N]$:
\begin{equation*}
K((x^s, x^{-s}), (x^{s'}, x^{-s'})) = l(x^{s}, x^{s'}).
\end{equation*}
{In our implementation we use  the kernels 
$k(x^{-s}, \cdot) = k(x^{(s,1)}, \cdot) + k(x^{(s,0)}, \cdot) $ from Section B, 
defined not on the whole graph $x$ but on the set $x^{-s}$.}


\section{Additional Details on Distance-based Test Statistics}\label{supp:mgra_tv}
\subsection{Modified Graphical Tests with Total-Variation {Distance}}\label{supp:mgra_tv2}
\begin{figure*}[t!]
\begin{center}
		\subfigure[The null model]{\includegraphics[width=0.48\textwidth]{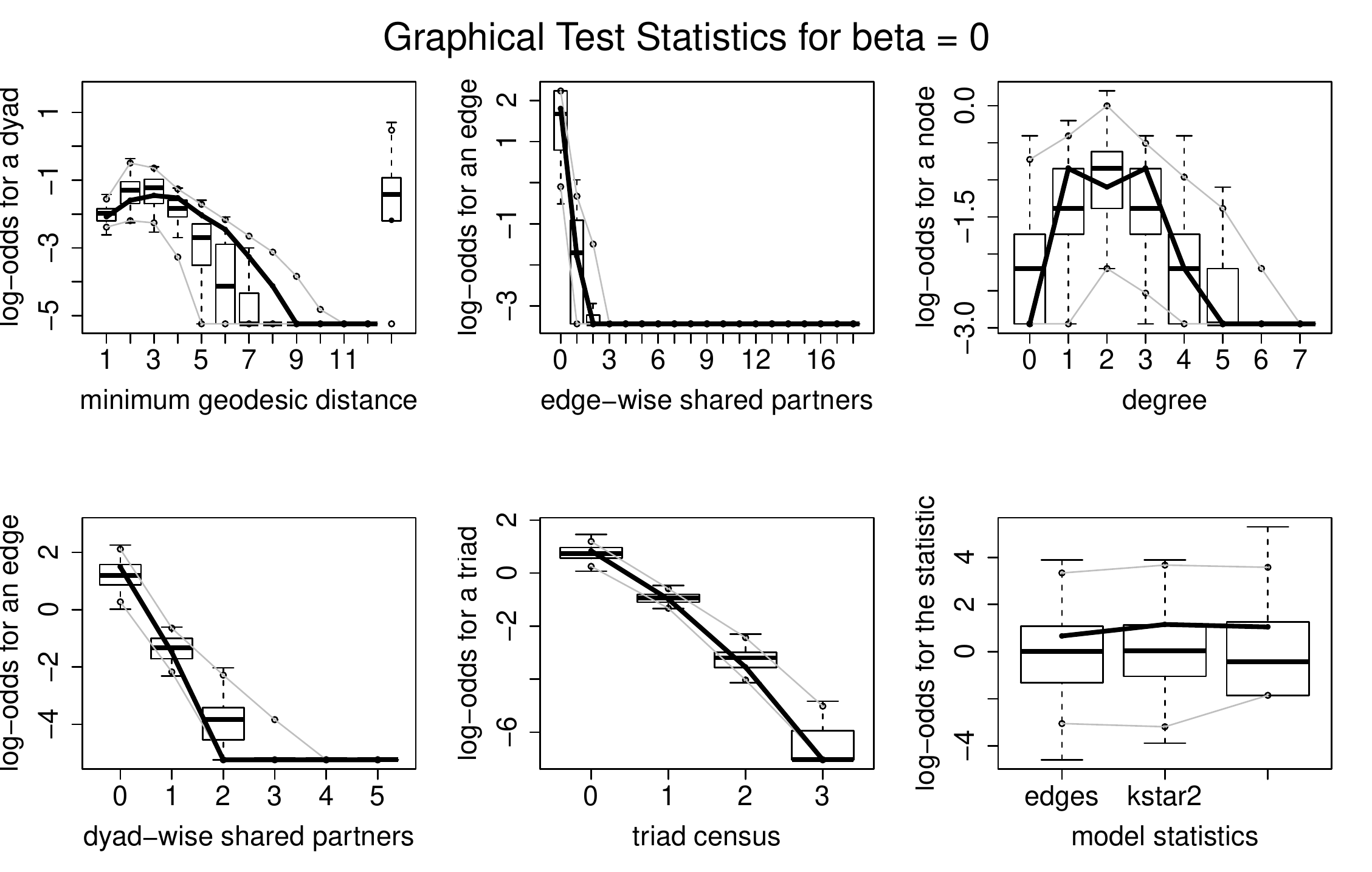}\label{fig:graphical_null}}\subfigure[A small perturbation {of the null model} 
		]{	\includegraphics[width=0.48\textwidth]{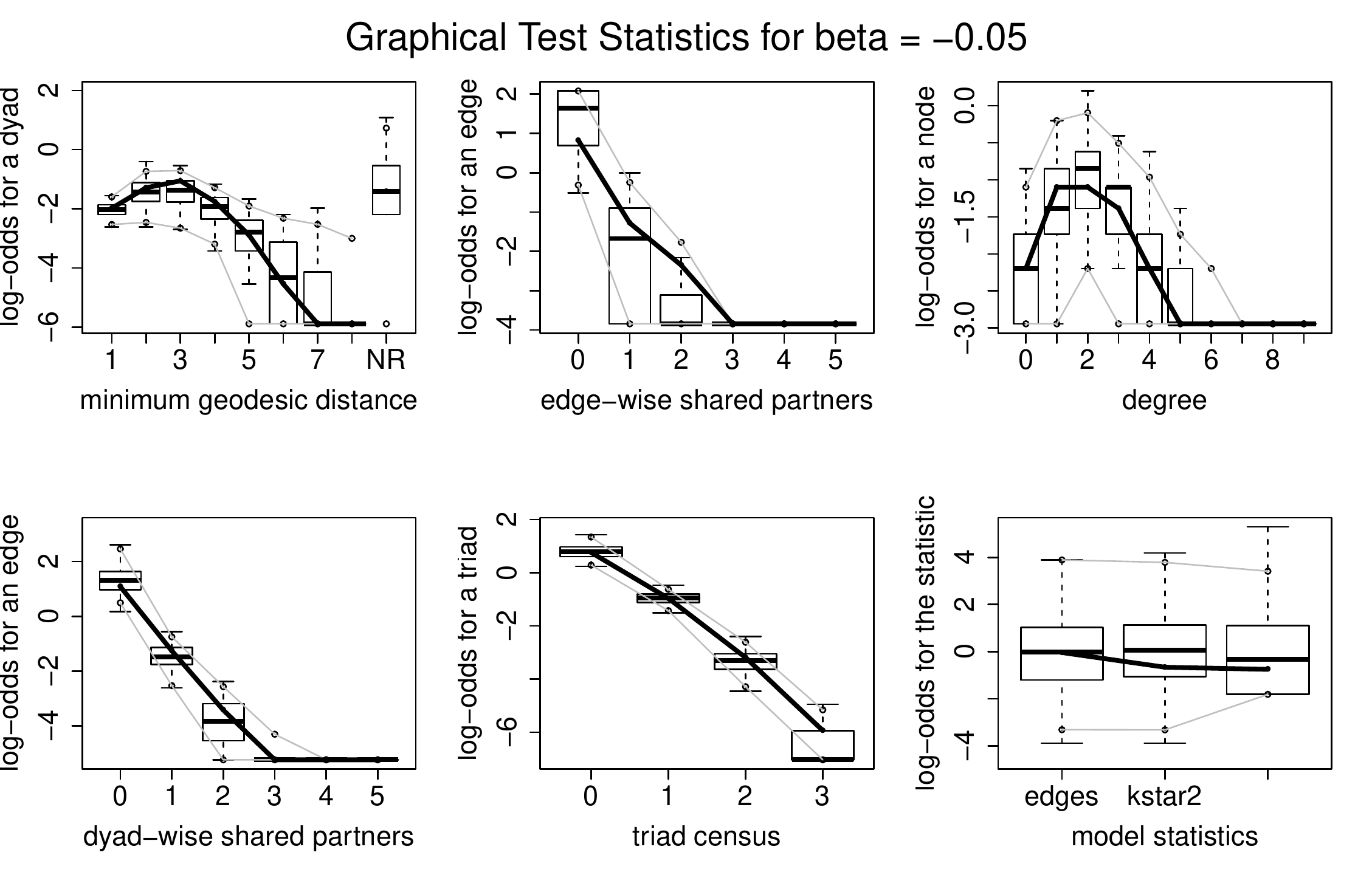}\label{fig:graphical_hard}}
		\subfigure[
		A {moderate} perturbation {of the null model} 
		]{
		\includegraphics[width=0.48\textwidth]{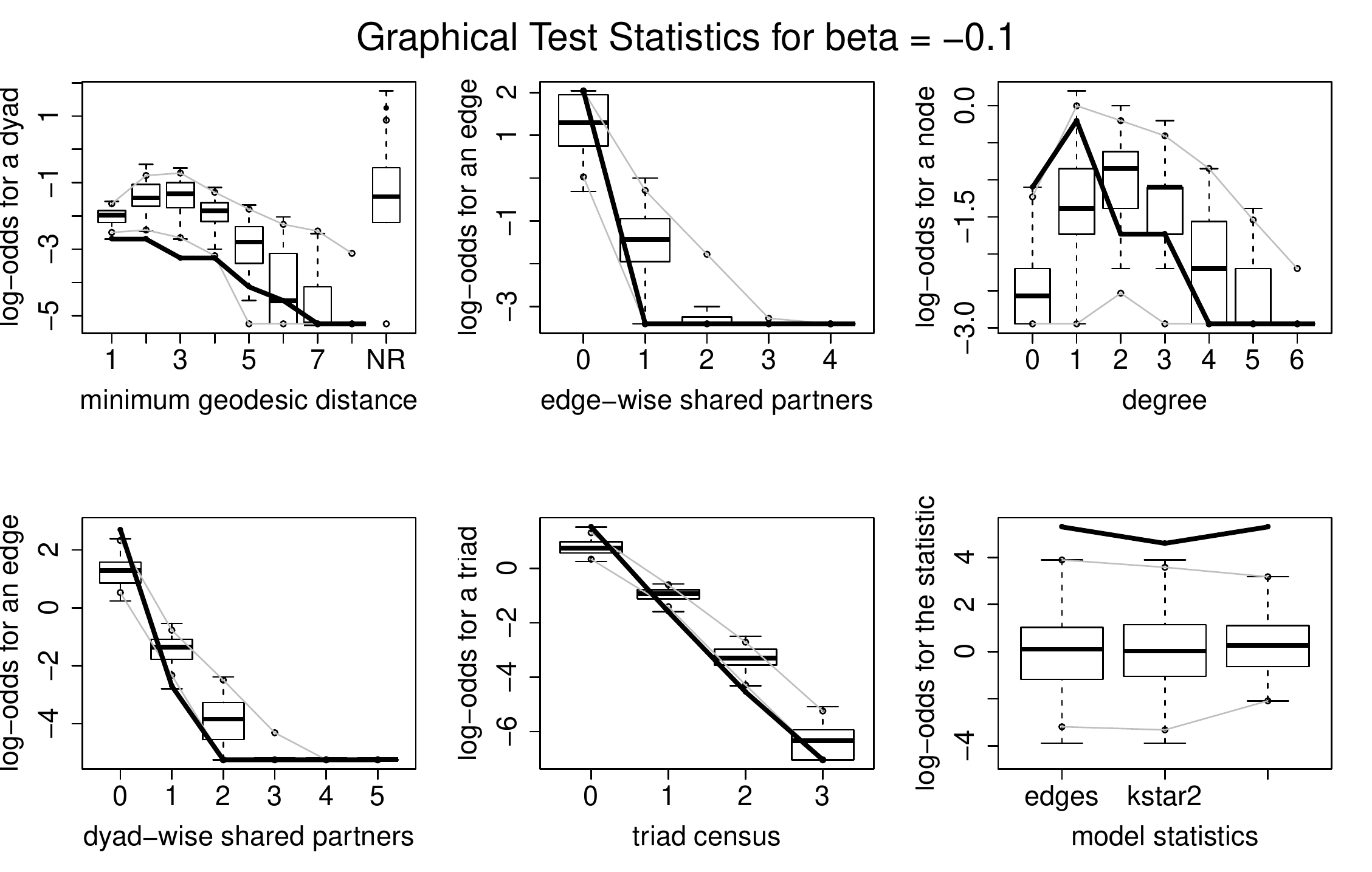}\label{fig:graphical_mod}}\subfigure[
		A {larger} perturbation {of the null model}]
	{\includegraphics[width=0.48\textwidth]{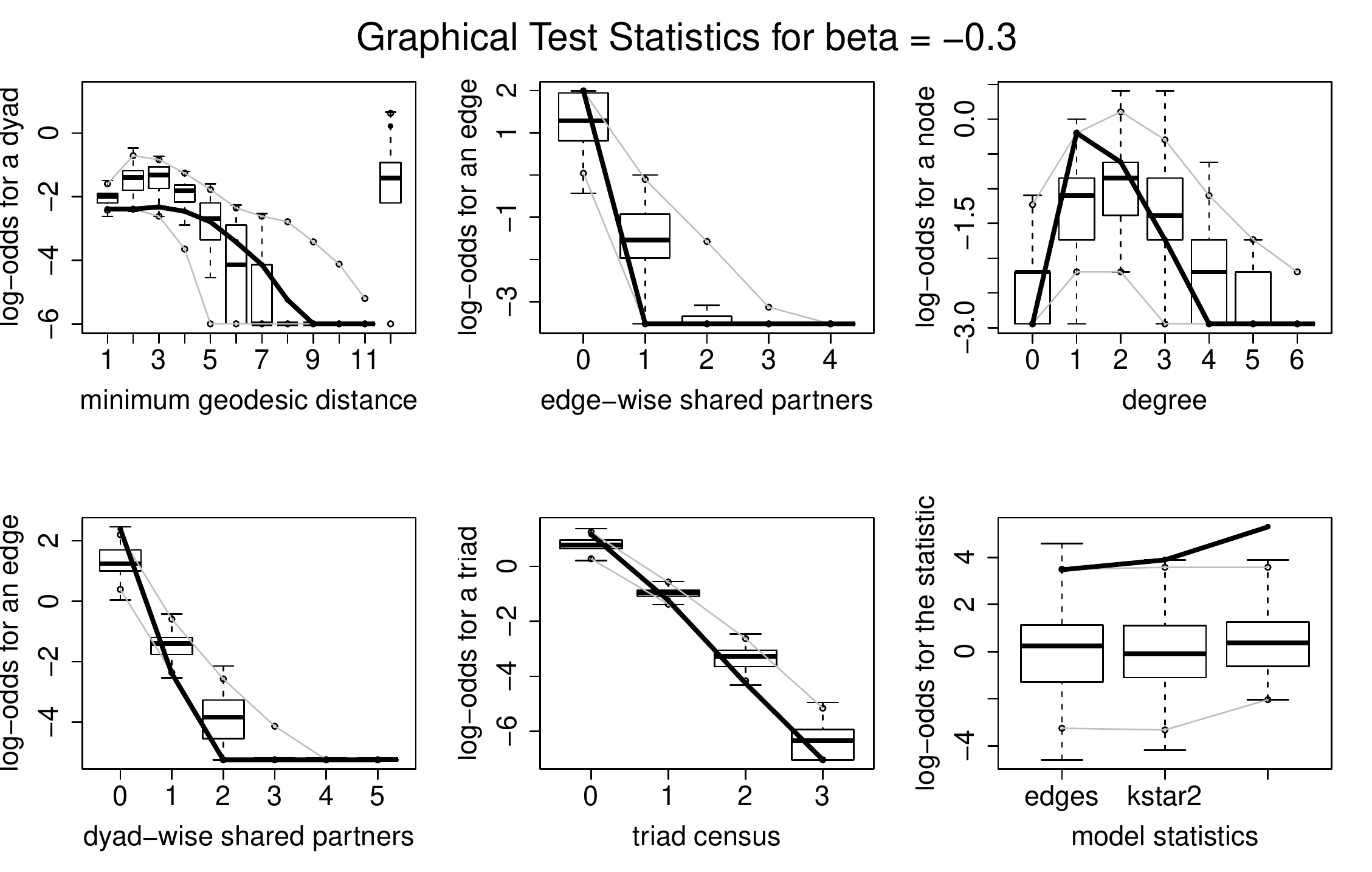}\label{fig:graphical_easy}}
	\caption{Graphical Tests with different beta parameters}\label{fig:graphical_test_demo}
	\end{center}
\end{figure*}

{Here we} give  the details of the modified graphical tests based on Total-Variation (TV) distance, \textbf{mGra},  presented in Section \ref{sec:exp}. {To assess the goodness-of-fit to a specific ERGM,} \cite{hunter2008goodness} proposed to compare network statistics from the observed network to {those of}  simulated networks from the null model {via box plots and Monte-Carlo $p$-values}. {These network statistics are 
\begin{itemize}
\item the degree distribution, with $d_k$ the number of vertices which have degree $k$;
\item the number of edge-wise shared partners, which is the number of pairs of vertices which are neighbours and which have exactly $k$ common neighbours; 
\item the number of dyad-wise shared partners,  which is the number of pairs of vertices which  have exactly $k$ common neighbours (but are not necessarily themselves neighbours);
\item the triad census, with 4 possible triads where triads are configurations on 3 vertices; the configurations are 0 edges, 1 edge, 2 edges and 3 edges;
\item the statistics which are included in the ERGM {as in Definition \ref{def:ergm}}.
\end{itemize} 
} Fig.~\ref{fig:graphical_test_demo} shows an example of a graphical test based on the E2ST model Eq.\eqref{eq:e2st} with the 2-star coefficient $\beta_2$ perturbed. By comparing whether the observed statistics (the bold line) deviates from the simulated null, one can visually {assess}  whether the null hypothesis should be rejected. 
For instance, in Fig.~\ref{fig:graphical_null} where the network is generated from the null {distribution}, the observed network statistics are all within the range in which 95 percent of the simulated observations fall.

{When} the difference between the null {distribution and the distribution which generates the data}
is small, the graphical {method}  may not easily distinguish the {two models} depending on the network statistics of choice. As shown in Fig.~\ref{fig:graphical_hard}, with a network from a model with small perturbation from the null {distribution}, we see {this}
effect.
However, when the difference between {data simulated under} the null  {distribution} and the data is {substantial} enough, we can see, e.g. from Fig.~\ref{fig:graphical_mod}, {that} the minimum geodesic distance and the triad census from the observed network clearly differ from the simulated null, {and} the null {hypothesis}  can be rejected. {The box plots are also used to carry out Monte Carlo tests for each possible observation (for example a {specific triad count}) by giving a $p$-value for this specific test.} 

{While every observed value can be used for a Monte Carlo test, \cite{hunter2008goodness} does not provide a}
systematic procedure to reach an overall conclusion about rejection.  For instance, it is not clear 
whether the null is {to be} rejected when Fig.~\ref{fig:graphical_easy} is observed. To surpass such issue, we further develop the testing procedure {by using the}
{TV} distance between distributions {for the observed and simulated distributions of the summary statistics} from \cite{hunter2008goodness}. Denote by  $S$  the random variable of a network statistic of choice and  by $\mathcal{S}$  the space for $S$. Using the vertex degree of a simple undirected  network on $n$  {vertices} as an example,  $S$ is a discrete random variable  taking values from $0$ to $n-1$. Further denote by $S_{z'}$  the network statistic of an observation $z'$ from the null model $q$ and by $s_x$ the network statistics from the observed $x$. {Then with ${\mathcal R}$ denoting the set of possible values of $S$,}
$$
d_{TV}(S_{z'}; S_x) = \sup_{A \subset {\mathcal R} } | \E[h_A(S_{z'}) - h_A(S_x)] |  = \frac{1}{2}\sum_{s\in\mathcal{S}}\left|P(S_{z'}=s)- P(S_x=s) \right|$$
where $h_A(s) = \mathds{1}_{s\in A}$ is
{1 if $s \in A$ and 0 otherwise.}
Our test statistic measures the distance between the distribution of a network statistic $S$ in the observed network  $x$ and  under the null model $q$ as follows:
$$
D_{TV}(q,x;S) = \E_{z'\sim q} [d_{TV}(S_{z'}; S_x)] .
$$
To {estimate}  $\E_q$, we simulate $m'$ networks from the null model $q$, i.e. $z'_1,\dots, z'_{m'}\sim q$ {and use as} empirical estimate for $D_{TV}(q,x;S)$ 
$$
\widehat{D_{TV}}(q,x;S) = \frac{1}{m'}\sum_{j=1}^{m'} [d_{TV}(S_{z'_j}; S_x)].
$$
To {assess} how the test statistics is distributed under the null {hypothesis}, i.e. $x\sim q$, we simulate $z\sim q$ {from the null distribution}. Similar to a Monte-Carlo test, we simulate {independent} network samples $z_1, \dots, z_m \sim q$ and compute
$\widehat{D_{TV}}(q,z_i;S)$, for $i\in [m]$. Then we reject the null if $\widehat{D_{TV}}(q,x;S)$ exceeds the $(1-\alpha)$-quantile level in the { simulated} {observations $\{\widehat{D_{TV}}(q,z_1;S),\dots,\widehat{D_{TV}}(q,z_m;S)\}$ under the}  null {distribution}.

\subsection{Test Statistics with Mahalanobis {Distance}}
{\cite{lospinoso2019goodness} proposed using a Mahalanobis distance instead of the total variation distance. Suppose that a vector $S(x)$ of network summaries is observed and that the null distribution is parametrised by $\theta$. Denote $\mu(\theta) = \E_\theta (X)$ as the expectation  and $\Sigma(\theta) = Cov_\theta (X)$ as the covariance matrix under $\theta$. The {\it Mahalanobis distance}
$$  D_M(x, \theta;S) = ( S(x) - \mu (\theta) )^{\top} \Sigma (\theta)^{-1}  ( S(x) - \mu (\theta) )
$$
can then be used as test statistic. In practice, $\mu(\theta)$ and $\Sigma(\theta) $ are estimated using {independent} simulations $x_k, k=1, \ldots, m, $ 
from the distribution specified by $\theta$;
\begin{eqnarray*}
{\widehat \mu} = \frac1m \sum_{k=1}^m S (x_k); \quad \quad  \quad \quad \quad \quad 
{\widehat \Sigma} = \frac1m \sum_{k=1}^m ( S (x_k) - {\widehat \mu} ) ( S (x_k) - {\widehat \mu} )^{\top};
\end{eqnarray*} 
\begin{eqnarray*} 
{\widehat{D_M}} (x) = ( S(x) - {\widehat \mu})^{\top} {\widehat \Sigma} ^{-1}  ( S(x) - {\widehat \mu}) .
\end{eqnarray*} 
The $p$-value of the test is estimated by the plug-in estimator 
 $${\widehat p} = \frac1m \sum_{k=1}^m {\mathbb{1}} \{ {\widehat{D_M}} (x_k) > {\widehat{D_M}} (x)\}.$$
 In the main text this approach is abbreviated {\bf{MD}} and applied to the degree distribution for ERGMs.}

\begin{figure*}[t!]
	\begin{center}
    		\subfigure[$n=20$, $\alpha=0.05$,  {with} $\beta_2$  {in Eq.\eqref{eq:e2st}}
		 perturbed]{\includegraphics[width=0.47\textwidth, height=0.34\textwidth]{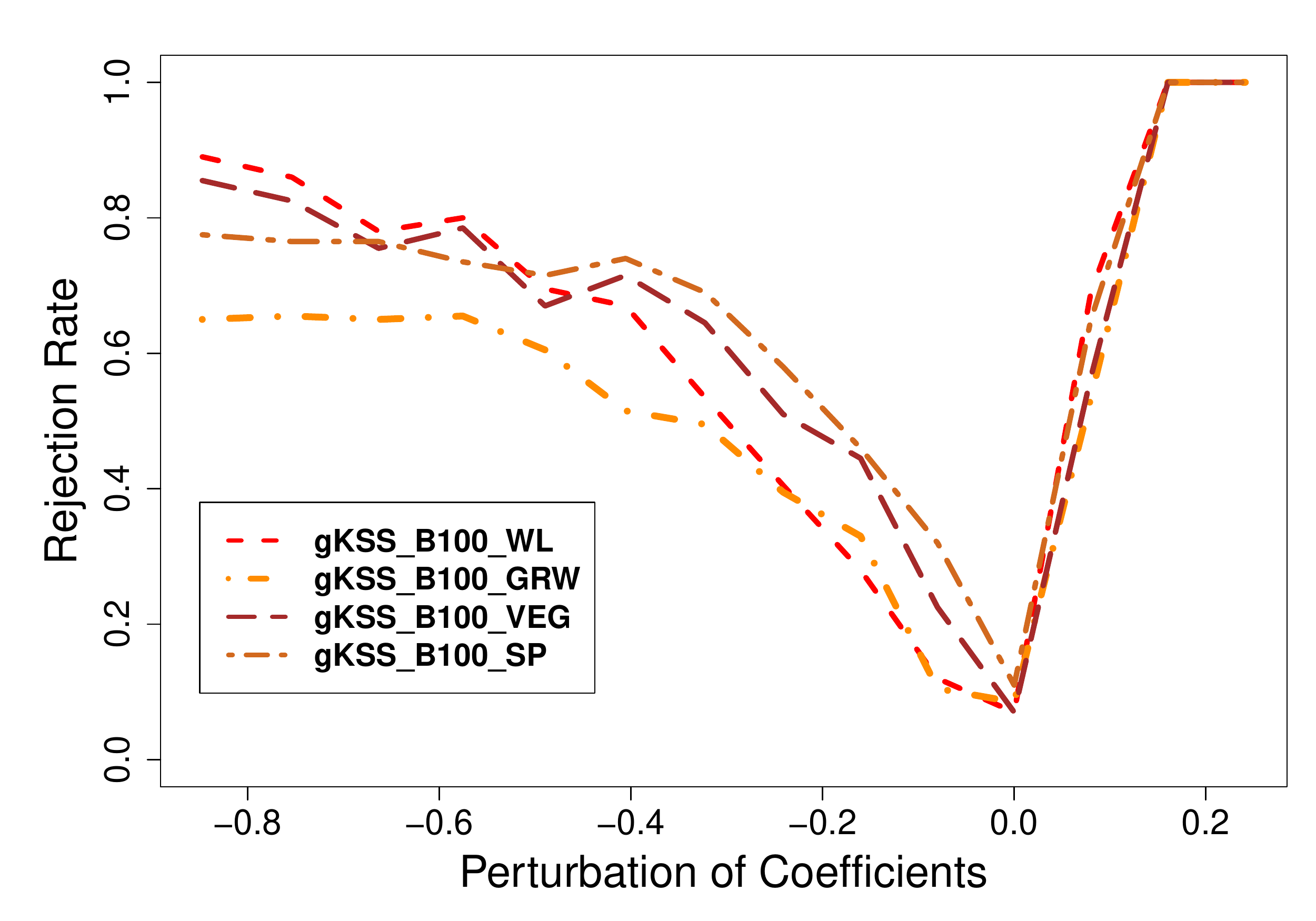}\label{fig:perturb_kernels}}
        \subfigure[{log computational} time  {for one test,} in seconds, {with $m = 1000$ simulated networks}]{\includegraphics[width=0.47\textwidth, height=0.34\textwidth]{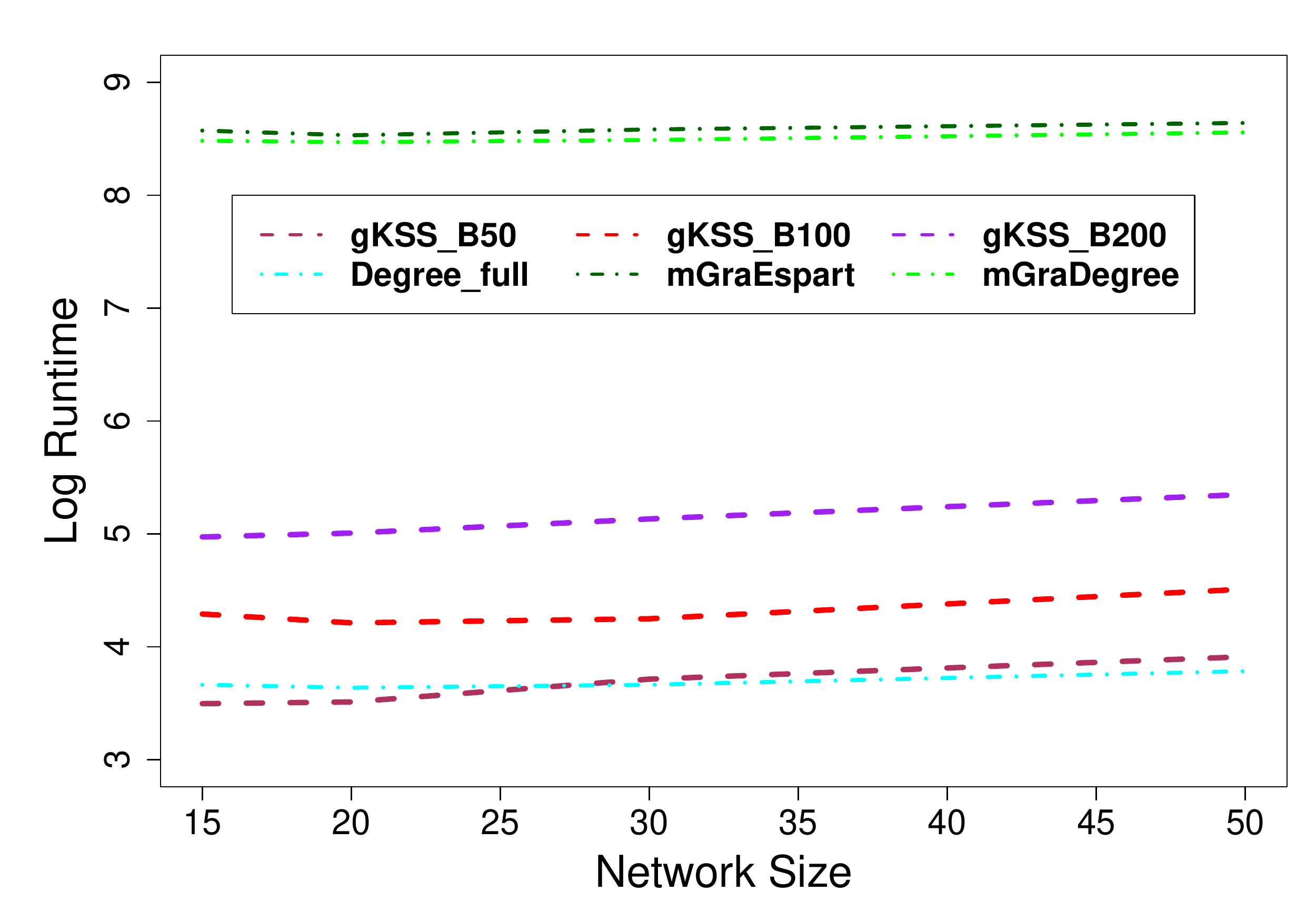}\label{fig:runtime}}
	\caption{Additional experiment results}\label{fig:exp_add}
	\end{center}
\end{figure*}

\section{Additional Experiment Results}\label{supp:exp}
\paragraph{Test performances with graph kernels}
Fig.\ref{fig:perturb_kernels} shows the results  for testing the E2ST model Eq.\eqref{eq:e2st} with the 2-star coefficient $\beta_2$ perturbed using the different kernels described in Section {B}. {Using the abbreviations from Section B, the relevant choices of kernel parameters are} $\sigma =1$ for the VEG kernel, level$=5$ for the WL kernel, and $\lambda = \frac13$ in the GRW kernel.  
Similar to the WL kernel used in the main text
, the other choices of graph kernels achieve {fairly} good test power with the gKSS statistic. {In our additional experimental results on the rejection rate, the} re-sample size is $B=100$ for all kernel choices. From  Fig.\ref{fig:perturb_kernels} we see that the test power is slightly higher with a small perturbed coefficient when the SP kernel and  the VEG kernel are employed,  while for larger perturbed coefficient ({resulting in} sparser graphs) the WL kernel  better distinguishes the observation from the null. {For large negative $\beta_2$ the GRW kernel has the poorest rejection rate. These differences in performance are no surprise as} 
different choices of kernel {emphasise} 
different aspect of graph topology.

\paragraph{Computational time} In Fig.\ref{fig:runtime}, we give more results for the computational time {for one test, with 1,000 simulated networks}. {These results complement the reported results of Table 1 in the main text.} As the {number of vertices in the network} increases, there is an increase in the computational complexity. However, as the main computation costs come from simulating the ERGM, we see from the plot that the slope is not {substantial} compared {to the difference in testing procedures}.

\section{{Comparison with the Kernel Discrete Stein Discrepancy on Testing Goodness-of-fit}
}\label{supp:KDSD_supp}

\subsection{Discrete Stein Operator}
In this section, {we}  {compare our approach }with the discrete Stein operator {introduced in \cite{yang2018goodness}. First we need some definitions.} 
\begin{Definition}\label{def:cyclic_perm}[Definition 1 \citep{yang2018goodness}](Cyclic permutation). For a set {${\mathcal{N}}$} of finite cardinality,
a cyclic permutation $\neg: {{\mathcal{N}}} \to {{\mathcal{N}}}$ is a bijective function
such that for some ordering $x^{[1]}, x^{[2]}, \dots , x^{[|X |]}$ of the
elements in ${{\mathcal{N}}}$ , $\neg x^{[i]} = x^{[(i+1) mod |X |]}, \forall i = 1, 2, \dots , |X |$.
\end{Definition}

\begin{Definition}\label{def:difference_operator}[Definition 2 \citep{yang2018goodness}]
Given a cyclic permutation $\neg$ on {${\mathcal{N}}$} , for any d-dimensional vector $x = (x_1, . . . , x_d)^{\top} \in {\mathcal{N}}^d$, write $\neg_i x := (x_1, \dots , x_{i-1}, \neg x_i, x_{i+1}, \dots, x_d)^{\top}.$ 
For any function $f : {\mathcal{N}}^d \to \mathbb{R}$, denote the (partial) difference operator as
\begin{align*}
    \Delta_{x_i}f(x): = f(x) - f(\neg_i x), \quad \quad  i=1,\dots,d
\end{align*}
and {introduce} the difference operator:
\begin{align*}
    \Delta_\neg f(x): = ( \Delta_{x_1} f(x), \dots ,  \Delta_{x_d} f(x))^{\top}.
\end{align*}
\end{Definition} 
{Here we use the notation $ \Delta_\neg$ to distinguish it from the notation in the main text, where we used  
$\Delta_s h(x) = h(x^{(s,1)}) - h(x^{(s,0)})$ and $ || \Delta h ||  = \sup_{s \in [N]} | \Delta_s h(x)|.$
}

For discrete {distributions $q$}, \cite{yang2018goodness} proposed {the following} discrete  Stein operator, which is based on the difference operator {$\Delta_\neg $} constructed from  a cyclic permutation: 
\begin{align}
\A^{D}_q f(x) = f(x)\frac{\Delta_\neg q(x)}{q(x)} - \Delta_\neg^{\ast} f(x), \label{eq:disrete_stein}
\end{align}
where $\Delta_\neg^{\ast}$ denotes the adjoint operator of $\Delta_\neg$.

{In particular, for $q$ the distribution of an ERGM, {with ${\mathcal{N}} = \{0, 1\}^{N},$} the discrete Stein operator proposed \citep{yang2018goodness}  can be written in the form of $\A^{D}_q f(x) = \sum_s \A^{D,s}_q f(x)$ where} 
\begin{equation}\label{eq:kdsd_component}
\A^{D,s}_q f(x)=(-1)^{\mathds{1}_{\{x=x^{(s,0)}\}}}\frac{f(x^{(s,1)})q(x^{(s,0)}) - f(x^{(s,0)})q(x^{(s,1)})}{q(x)}    .
\end{equation}

Recall the ERGM Stein operator of the form $\A_{q}f(x) = \frac{1}{N}\sum_{s\in[N]}\A_q^{(s)}f(x)$ and  Eq.\eqref{eq:stein_component},
\begin{align*}
    \A^{(s)}_q f(x) &= q(x^{(s,1)}|x) \Delta_s f(x) + \left( f(x^{(s,0)}) - f(x)\right) \\
    &= \frac{q(x^{(s,1)})}{q(x^{(s,1)})+q(x^{(s,0)})} \left(f(x^{(s,1)}) - f(x^{(s,0)}) \right) + \left( f(x^{(s,0)}) - f(x)\right)\\
    & = \frac{\mathds{1}_{\{x=x^{(s,0)}\}}q(x^{(s,1)}) - \mathds{1}_{\{x=x^{(s,1)}\}}q(x^{(s,0)})}{q(x^{(s,1)})+q(x^{(s,0)})} \left(f(x^{(s,1)}) - f(x^{(s,0)}) \right).
\end{align*}

{We} illustrate the difference between the ERGM Stein operator and the discrete Stein operator 
for a 
Bernoulli random graph
with  $\P (s=1) = q, \forall s $. Due to the independence, we have $q(x^{(s,1)}|x) = q$ and $q(x^{(s,0)}|x) = 1-q$. With  Eq.\eqref{eq:stein_component}, our Stein operator becomes
\begin{equation}\label{eq:ER_graph_stein}
\A_q f(x) =\frac{1}{N} \sum_s \left({q} - x_s \right)(f(x^{(s,1)})-f(x^{(s,0)})).
\end{equation}
The $\operatorname{KDSD}$ in this case can be written as:
\begin{align*}
\A_q^{D} f(x) &=\frac{1}{{q(x)} }\sum_s(-1)^{1-x_s} \left((1-q)f(x^{(s,1)})-q f(x^{(s,0)})\right)
\end{align*}
{with $q(x) = q^{\sum_s x_s} (1-q)^{N - \sum_s x_s}$.}
{Thus,}
for different values, Eq.\eqref{eq:ER_graph_stein} {is a weighted sum of the terms $ \left( f(x^{s,1}) - f(x^{s,0})\right)
$, while  $\operatorname{KDSD}$ is a weighted sum of the terms $\left( (1-q)f(x^{s,1}) - q f(x^{s,0})\right)$ and requires the calculation of the binomial probability $q(x)$.} 

{The operators in Eq.\eqref{eq:kdsd_component} and Eq.\eqref{eq:stein_component} clearly differ in their scaling as well as in their repercussions for re-sampling. While the operator in   Eq.\eqref{eq:stein_component}  emerges from Glauber dynamics and hence has a natural re-sampling interpretation, no such interpretation is available for the operator in  Eq.\eqref{eq:kdsd_component}.} 
Explicitly, the discrete Stein operator {${\mathcal{T}}_q^D$} has  $q(x)$ in the denominator, indicating the fixed $x$ realisation; however,  the Stein ERGM operator {${\mathcal{T}}_q$} has $q(x^{-s})$ in the denominator which {stems from} the conditioning in Glauber dynamics. 
{Consequently, the corresponding Stein discrepancy (called KSDS) differs from Eq.\eqref{eq:ksd} in the main text, and,  although usually only one network is available, the goodness-of-fit test in \cite{yang2018goodness} requires independent and identically distributed network observations.} 

{A second key difference is that  {the test in} \cite{yang2018goodness} requires the support of the unknown network distribution to be identical to the support of the ERGM which is described by $q$. In practice this condition is difficult  if not impossible to verify. In contrast, $\operatorname{\widehat{gKSS}}$ does not make any such assumption.}

\subsection{Comparison Between Graph Testing}

\paragraph{Testing with multiple graph observations}
The relevant kernel discrete Stein discrepancy {(KDSD)} from the discrete Stein operator \citep{yang2018goodness} is defined via taking the supreme over appropriate unit ball RKHS test functions, similar as in Eq.\eqref{eq:ksd}
\begin{equation}
\operatorname{KDSD}(q\|p;\H) = \sup_{\|f\|_{\H}\leq 1}\E_p[\T_q^{D} f(x)].
\label{eq:kdsd}
\end{equation}
{\cite{yang2018goodness} built a} goodness-of-fit testing procedure 
{based on the} KDSD  for ERGM 
for multiple graph observations. 
Let $x_1,\dots,x_m \sim p$, be $m$ {independent} identically distributed graph observations. The KDSD 
{is} empirically estimated from the observed samples;  and as the number of observed samples $m\to \infty$, {in probability,}
$$\frac{1}{m} \sum_i [\T_q^{D} f(x_i)]  \to \E_p[\T_q^{D} f(x)].$$
The rejection threshold 
{is}
determined via a wild-bootstrap procedure \citep{chwialkowski2014wild}.

While the $\operatorname{\widehat{gKSS}}$ type of statistics} based on the ERGM Stein operator in Eq.\eqref{eq:ergm_stein},  $\T_q f(x) = \frac{1}{N}\sum_{s\in [N]} \T_q^{(s)}f(x)$, {focuses}
on a single graph observation,  this ERGM Stein operator {could {similarly} be used to assess goodness-of-fit  when multiple graph observations are available.} {In particular,  $\E_q[\T_q f(x)] = 0$.}
 {Hence,  we {introduce}
 the graph kernel Stein discrepancy (gKSD) {as} 
$$\operatorname{gKSD}(q\|p;\H) = \sup_{\|f\|_{\H}\leq 1}\E_p[\T_q f(x)]= \sup_{\|f\|_{\H}\leq 1}\E_p \left[\frac{1}{N}\sum_s \T^{(s)}_q f(x)\right].$$
{Here the sum is taken over all $N$  pairs of vertices and the expectation is taken with respect to the ERGM $q$. When there are $m$ independent observations $x_1, \ldots, x_m \sim p$ available then we} can empirically estimate $\operatorname{gKSD}(q\|p;\H)$ by 
$\frac{1}{m}\sum_i [\frac{1}{N}\sum_s \T^{(s)}_q f(x_i)]$, which is {weakly} consistent {by} the law of large numbers. Then we   {use this statistic to} build a  goodness-of-fit test for multiple graph observations, {determining the threshold via the same wild-bootstrap procedure as for the KDSD}.

{To compare the KDSD and the $\operatorname{gKSD}$ tests we} consider the goodness-of-fit test setting as studied in \cite{yang2018goodness}, using the E2ST model as presented in Eq.\eqref{eq:e2st}. We set the null parameters $\beta$ to 
$(\beta_1,\beta_2,\beta_3) = (-2, 0.0, 0.01)$ {and carry out a test at significance level $\alpha = 0.05$ using 100 repeats}. For the alternative, we perturb the coefficient for 2-stars, $\beta_2$, and report the rejection rate in Table \ref{tab:E2ST20WL}. Note that $\beta_2=0.00$ recovers the null {distribution}.  {In this experiment,}
with a small number of graph observations, $m=30$, gKSD captures the difference between the null model and the alternative model better, resulting in a higher test power, compared to KDSD. {Both gKSD and KDSD have higher power for $\beta_2 > 0$ than for $\beta_2 < 0$ of the same magnitude. This finding is plausible as increasing $\beta_2$ leads to denser networks.} 

\begin{table}[ht]
\begin{center} 
\resizebox{0.9\textwidth}{!}{%
\begin{tabular}{rrrrrrrrrrrr}
  \hline
{$\beta_2$} & -0.1 & -0.08 & -0.06 & -0.04 & -0.02 & 0.00 & 0.02 & 0.04 & 0.06 & 0.08 & 0.1 \\ 
  \hline 
gKSD  & 0.32 & 0.30 & 0.24 & 0.14 & 0.10 & 0.04 & 0.08 & 0.22 & 0.18 & 0.28 & 0.54 \\
KDSD   & 0.08 & 0.05 & 0.6 & 0.04 & 0.01 & 0.02 & 0.03 & 0.03 & 0.06 & 0.12 & 0.16 \\
   \hline
\end{tabular}
}
\end{center}
\caption{Rejection rate for the E2ST model $(\beta_1, \beta_2, \beta_3) = (-2, 0, 0.01)$ with perturbation of the 2-star coefficient $\beta_2$: W.L. Kernel of level 3; sample size $m=30$; graph size $n=20$; test significance level $\alpha = 0.05$.}
\label{tab:E2ST20WL}
\end{table}

\paragraph{Testing with a single graph observation}
The ERGM Stein operator satisfies {the mean zero property Eq.\eqref{meanzeroproperty}}
when flipping each edge $s$ given the rest of the graph $x_{-s}$. 
This is {a} key ingredient that KDSD does not {satisfy};  
KDSD relies on a cyclic permutation as in Definition \ref{def:cyclic_perm} to construct the partial difference operator in Definition \ref{def:difference_operator}, which depends on the order sequence of the cyclic permutation. As such, the {mean zero property of their} Stein operator {is based on sign flips} 
in each state of the discrete variable, instead of flipping the edge probability. Thus, the discrete Stein operator $\T^D_q$ {could not easily be}  adapted to construct a {subsampled} Stein statistic {such as $\operatorname{\widehat{gKSS}}$} to perform goodness-of-fit testing with a single graph observation.

\paragraph{Testing with {a few} graph observations}
An interesting setting {which is related to that of a}
single graph observation, is that {a few} graphs are observed, {with} the number of graphs assumed to be finite and  not tending infinity {with network size}. With the proposed gKSD goodness-of-fit test for a single graph observation, a possible approach and  a potential future research direction is applying multiple tests of goodness-of-fit, {one for each observed network,} with a Bonferroni correction \citep{bonferroni1936teoria} {to correct for multiple testing}. 

\end{document}